\theoremstyle{plain}
\newtheorem{theorem}{Theorem}
\newtheorem{lemma}{Lemma}
\newtheorem*{remark}{Remark}
\newtheorem{assumption}{Assumption}
\theoremstyle{definition}
\begin{document}
\title{MAPS: A Mode-Aware Probabilistic Scheduling Framework for LPV-Based Adaptive Control\thanks{This work has been submitted to the IEEE for possible publication. Copyright may be transferred without notice, after which this version may no longer be accessible.}}
\author{
	\vskip 1em
	Taehun Kim,
    Guntae Kim,
	Cheolmin Jeong,
	and Chang~Mook~Kang
}

\maketitle

\begin{abstract}
    This paper proposes Mode-Aware Probabilistic Scheduling (MAPS), a novel adaptive control framework tailored for DC motor systems experiencing varying friction. MAPS uniquely integrates an Interacting Multiple Model (IMM) estimator with a Linear Parameter-Varying (LPV) based control strategy, leveraging real-time mode probability estimates to perform probabilistic gain scheduling. A key innovation of MAPS lies in directly using the updated mode probabilities as the interpolation weights for online gain synthesis in the LPV controller, thereby tightly coupling state estimation with adaptive control. This seamless integration enables the controller to dynamically adapt control gains in real time, effectively responding to changes in frictional operating modes without requiring explicit friction model identification. Validation on a Hardware-in-the-Loop Simulation (HILS) environment demonstrates that MAPS significantly enhances both state estimation accuracy and reference tracking performance compared to Linear Quadratic Regulator (LQR) controllers relying on predefined scheduling variables. These results establish MAPS as a robust, generalizable solution for friction-aware adaptive control in uncertain, time-varying environments, with practical real-time applicability.
\end{abstract}

\begin{IEEEkeywords}
Interacting multiple model, linear time varying, gain scheduling, adaptive control
\end{IEEEkeywords}

\definecolor{limegreen}{rgb}{0.2, 0.8, 0.2}
\definecolor{forestgreen}{rgb}{0.13, 0.55, 0.13}
\definecolor{greenhtml}{rgb}{0.0, 0.5, 0.0}

\section{Introduction}
\IEEEPARstart{D}{C} motors are widely employed in industrial automation, robotics, and automotive systems, where precise position and velocity control are essential for reliable performance. However, real-world applications of DC motors are often affected by dynamic uncertainties, such as time-varying friction, external disturbances, and parameter drifts, all of which challenge the integrity of conventional control strategies. In particular, variations in viscous friction coefficients arising from changes in temperature, mechanical wear, or lubrication conditions can introduce significant tracking errors, oscillations, and long-term performance degradation~\cite{ArmstrongHelouvry1994,Canudas1995, virgala2013friction}.

Friction in electromechanical systems typically consists of multiple nonlinear components, including Coulomb friction, viscous friction, and Stribeck effects. Traditional fixed-gain linear controllers such as PID and LQR, while effective under nominal conditions, often degrade in performance when friction characteristics deviate from the design assumptions~\cite{Yang1993,Bamieh1999}. Although sliding mode control, disturbance compensation, and adaptive control techniques have been introduced to address robustness against such uncertainties~\cite{Yao2013,Dinesh2016,Ren2019, Zhou2021}, challenges remain in achieving optimal performance under highly nonlinear conditions. Meanwhile, recent developments in friction modeling using machine learning and neural network have demonstrated promise in capturing complex, nonlinear behaviors~\cite{Khurram2015,Hu2019,Ren2021}, but they often lack interpretability and certification when applied to safety-critical systems.

\begin{figure}[!t]
    \centering
    \includegraphics[width=0.5\textwidth]{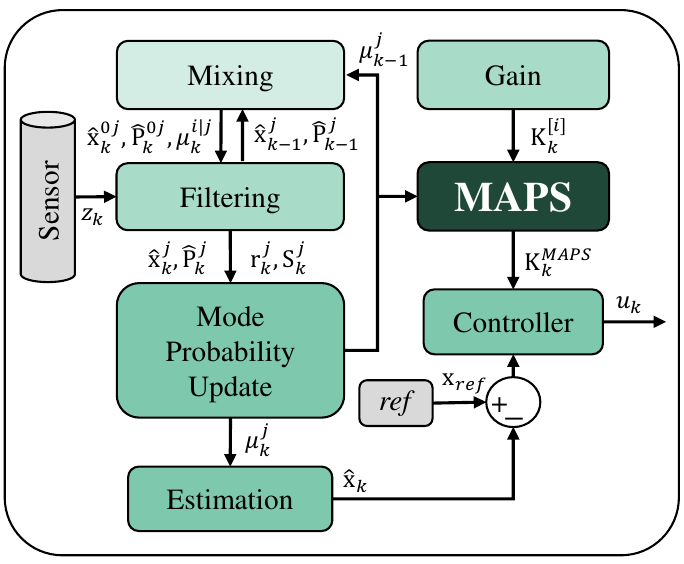}
    \caption{MAPS framework architecture.}
    \label{fig:archi}
\end{figure}

Recent years have witnessed significant advances in parameter estimation techniques tailored for adaptive control of nonlinear and time-varying systems. Methods such as adaptive observers, extended Kalman filter, and IMM algorithm have been extensively studied for their ability to identify varying system parameters and modes in real time~\cite{Blom1988, mahyuddin2013adaptive, reina2019vehicle}. Meanwhile, data-driven approaches employing machine learning and recursive regression have increasingly been leveraged to capture system dynamics without relying on explicit physical models~\cite{ding2016recursive, shlezinger2023model, kim2024data, lee2025bidirectional}. These estimation methodologies improve state and parameter reconstruction fidelity, especially under nonlinear and stochastic uncertainties~\cite{guo2014anti, chen2015disturbance}.

Despite these developments, parameter estimation techniques typically operate somewhat independently from controller design and adaptation. They provide state or parameter estimates which are subsequently fed into control algorithms designed separately. This architectural decoupling often introduces latency and limits the control system's ability to respond robustly and optimally to abrupt or latent changes in system dynamics issues that are particularly acute when the underlying scheduling variables are unmeasured or poorly observable, such as friction variations in electromechanical systems~\cite{guo2014anti}.

To overcome these limitations and construct an estimation control framework grounded in physical reality, our work experimentally identified the viscous friction coefficient through steady-state velocity measurements under varying load conditions in a HILS environment. By applying different external loads and measuring the corresponding steady-state speeds, a linear regression approach was used to estimate the realistic range of friction coefficients, defining the minimum and maximum bounds prevalent in the system. These practically obtained bounds enable the construction of vertex models that faithfully reflect frictional uncertainty, which serve as a robust foundation for our tightly integrated adaptive estimation and control design.

On the control front, gain scheduling methods based on LPV systems stand out for enabling online adaptation of controller gains by interpolating among vertex gains tied to measurable scheduling parameters~\cite{Bamieh1999, atoui2022toward}. Classical gain scheduling techniques, including lookup tables or linearization-based interpolation, enjoy industrial popularity due to their practical implementability~\cite{yahagi2022direct, hencey2009robust}. However, the effectiveness of these methods hinges critically on the choice and availability of accurate scheduling variables. Unmeasurable nonlinearities or rapidly switching dynamics often violate the assumptions needed for stable and efficient gain interpolation, leading to suboptimal or unstable performance~\cite{Kim2020, boufrioua2022gain, wei2014survey}.

Recent contributions have sought to enhance gain scheduling robustness through adaptive or polytopic control synthesis that partially accounts for parameter uncertainties~\cite{heemels2010observer, li2021polytopic}. Nevertheless, these approaches generally lack a principled integration with online estimation frameworks, thereby failing to exploit real-time probabilistic insights about system modes or latent parameters. The prevalent bifurcation of estimation and control adaptation domains impedes achieving seamless, robust, and optimal performance in systems subject to complex, latent, or abrupt uncertainties.

From a practical standpoint, integrated adaptive control schemes capable of probabilistically estimating system mode transitions and directly embedding this information into gain scheduling constitute a promising frontier. Such frameworks would enable controllers to dynamically and optimally respond to hidden dynamics, overcoming the functional disconnect that hampers most conventional approaches. This integration is especially critical in safety and performance critical applications, including advanced robotics and automotive control, where resilience against latent friction and load disturbances determines system reliability and efficiency~\cite{chu2022observer, zhang2014robust}.

Motivated by this critical gap, the present work introduces MAPS framework that tightly couples IMM-based mode estimation with LPV gain-scheduled control synthesis. By leveraging real-time mode probabilities as convex scheduling weights, MAPS realizes adaptive controller gains that respond smoothly and robustly to changing uncertainty modes without relying on explicit friction modeling or heuristic scheduling assumptions. 

The need for such reliable and adaptive control architectures is rapidly increasing as intelligent mobility, advanced manufacturing, and healthcare robotics demand resilience to unmeasurable or abrupt parameter variations for safety and efficiency. Unlike traditional methods, MAPS advances the field by uniting probabilistic inference with robust polytopic gain scheduling, achieving closed-loop performance that is not only theoretically grounded but also experimentally validated in realistic settings. While the present study demonstrates its effectiveness on friction-varying DC motor systems, the underlying methodology is inherently generalizable to a wide range of hybrid or parameter-varying systems with latent uncertainties. 

In MAPS, the probabilistic state and mode estimates from the MAPS are used to schedule optimal LQR gains dynamically, enabling friction-aware adaptive control without the need for explicit friction model identification. This approach compares favorably with LQR methods~\cite{Bamieh1999,Kang2018,Seo2022}, which rely on predefined scheduling variables, and ensures both real-time adaptability and optimal gain response to varying friction levels. In contrast to previous studies that primarily focus on simulation, our method is validated on a real HILS testbed using the QUBE-Servo 2 platform.

In summary, this work presents a unified framework that bridges robust mode estimation and optimal control adaptation for DC motors with uncertain and time-varying dynamics. The integration of probabilistic reasoning into the control loop introduces a new avenue for friction aware adaptive control that is both theoretically grounded and practically validated. The main contributions of this paper are summarized as follows:

\begin{itemize}
    \item We propose a novel framework that leverages the mode probabilities from an IMM estimator as scheduling weights in a LPV control system, enabling a seamless integration of discrete-mode state estimation with continuous control adaptation.

\item A gain synthesis method, termed MAPS-gain, is developed by aggregating multiple mode-specific LQR gains through probabilistic weighting. This approach maintains a linear controller structure while adaptively responding to nonlinear and time-varying system dynamics.

\item The effectiveness and real-time applicability of the proposed MAPS control framework are validated through both Simulink and HILS experiments on a DC motor platform, demonstrating superior performance compared to conventional gain scheduled and fixed-gain LQR controllers.
\end{itemize}

\section{System Modeling}
\subsection{DC Motor Model}
The QUBE-Servo 2 DC motor is selected as the experimental platform. This system consists of a brushed DC motor, high-resolution optical encoder, voltage amplifier, and integrated data acquisition unit. The electrical and mechanical dynamics are described as:
\begin{align}
L_m \frac{di}{dt} &= -R_m i - K_e \omega + v \\
J_{eq} \frac{d\omega}{dt} &= K_t i - b_m \omega
\end{align}
where $L_m$ is inductance, $R_m$ is resistance, $K_e$ is back-EMF constant, $K_t$ is torque constant, $J_{eq}$ is equivalent inertia, and $b_m$ is nominal viscous friction coefficient. The detailed definitions and numerical values of all parameters can be found in Table~\ref{tab:params}.

\subsection{State-Space Representation}
Choosing the angular position (\( \theta \)), angular velocity (\( \omega \)), and armature current (\( i \)) as the state variables, the state-space representation of the system is formulated as
\begin{equation}
    \mathrm{x} = \begin{bmatrix} \theta & \omega & i \end{bmatrix}^\top, \quad u = v
\end{equation}
where \( \mathrm{x} \in \mathbb{R}^n \) denotes the state vector, and \( u \in \mathbb{R}^m \) is the input voltage. Here, $n$ and $m$ represent the dimensions of the state and input vectors, respectively.

The system dynamics and output are expressed by
\begin{equation}
    \dot{\mathrm{x}} = \mathrm{A} \mathrm{x} + \mathrm{B}u, \quad \mathrm{y} = \mathrm{C} \mathrm{x}
    \label{eq:continuous_model}
\end{equation}
with system matrices defined as
\[
\mathrm{A} = \begin{bmatrix}
    0 & 1 & 0 \\
    0 & -\dfrac{b_m}{J_{eq}} & \dfrac{K_t}{J_{eq}} \\
    0 & -\dfrac{K_e}{L_m} & -\dfrac{R_m}{L_m}
\end{bmatrix}, \quad
\mathrm{B} = \begin{bmatrix}
    0 \\ 0 \\ \dfrac{1}{L_m}
\end{bmatrix}, \quad
\mathrm{C} = \begin{bmatrix}
    1 & 0 & 0
\end{bmatrix}
\]

The matrix \( \mathrm{A} \) represents the system dynamics among the state variables. The matrix \( \mathrm{B} \) describes how the control input affects the states. The matrix \( \mathrm{C} \) selects the output variable being measured, which is the angular position (\(\theta\)) in this case.
The physical parameters used in these matrices are summarized in Table~\ref{tab:params}.

\begin{table}[!ht]
\caption{QUBE-Servo 2 Model Parameters}
\label{tab:params}
\centering
\begin{tabular}{l l l}
\hline\hline
\multicolumn{1}{c}{Symbol} & \multicolumn{1}{c}{Quantity} & \multicolumn{1}{c}{\pbox{20cm}{Value}}  \\ \hline
$K_t$ & torque constant & 0.042 N$\cdot$m/A \\
$K_e$ & back-emf constant & 0.042 V$\cdot$s/rad \\\
$J_{r}$ & rotor inertia & $4.0 \times 10^{-6}$ kg$\cdot$m$^2$ \\
$J_{h}$ & hub inertia & $0.6 \times 10^{-6}$ kg$\cdot$m$^2$ \\
$J_{d}$ & disc moment of inertia & $1.6 \times 10^{-5}$ kg$\cdot$m$^2$ \\
$J_{eq}$ & equivalent moment of inertia & $J_{r} + J_{h} + J_{d} $ \\
$L_m$ & inductance & 1.16 mH \\
$R_m$ & resitance & 8.4 $\Omega$ \\
$b_m$ & viscous friction coefficient & $1.0 \times 10^{-5}$ N$\cdot$m$\cdot$s/rad \\
\hline\hline
\end{tabular}
\end{table}

\subsection{Viscous Friction Coefficient Identification}
\label{sec:friction_identification}

The viscous friction coefficient \( b \) was experimentally identified through steady-state velocity measurements under two distinct conditions: 
\begin{itemize}
    \item Minimum friction (without external load)
    \item Maximum friction (external load applied via finger pressure)
\end{itemize}


The friction torque in DC motors is modeled as the sum of static friction, Coulomb friction, and viscous friction~\cite{virgala2013friction, Wang2016, Keck2017, Piasek2019, Li2022}. The total friction torque $\tau_{\mathrm{fric}}$ is given by:
\begin{align}
    \tau_{\mathrm{fric}} = \tau_{\mathrm{static}} + \tau_{\mathrm{coulomb}} + \tau_{\mathrm{viscous}}
\end{align}

    \begin{align}
        \tau_{\mathrm{static}} =
        \begin{cases}
            \tau_s \cdot \mathrm{sgn}(\omega), & |\omega| = 0,\ |\tau_m| < \tau_s \\
            0, & \text{otherwise}
        \end{cases}
    \end{align}
    where $\tau_s$ is the static friction torque.
    \begin{align}
        \tau_{\mathrm{coulomb}} = \tau_c \cdot \mathrm{sgn}(\omega)
    \end{align}
    where $\tau_c$ is the Coulomb friction torque.
    \begin{align}
        \tau_{\mathrm{viscous}} = b \omega
    \end{align}
    where $b$ is the viscous friction coefficient.
Thus, the overall friction model used in this paper is:
\begin{align}
    \tau_{\mathrm{fric}} = \tau_s \cdot \mathrm{sgn}_0(\omega) + \tau_c \cdot \mathrm{sgn}(\omega) + b\omega
\end{align}
where $\mathrm{sgn}_0(\omega)$ denotes the sign function that is nonzero only at rest ($\omega = 0$ and $|\tau_m| < \tau_s$). To identify the viscous friction coefficient $b$, steady-state experiments were conducted by applying various voltages $V_m$ and measuring the corresponding steady-state angular velocities $\omega_m$. The following relationships were used:
\begin{align}
    T_m &= K_t i_m = b\omega_m \\
    V_m &= R_m i_m + K_e\omega_m
\end{align}
At steady-state ($\dot{\omega}_m = 0$), the current is:
\begin{align}
    i_m = \frac{b}{K_t}\omega_m
\end{align}
Substituting into the voltage equation:
\begin{align}
    V_m = R_m \left(\frac{b}{K_t}\omega_m\right) + K_e\omega_m = \omega_m \left( \frac{R_m b}{K_t} + K_e \right)
\end{align}
By linear regression of $V_m$ versus $\omega_m$, the slope $\mu$ is obtained:
\begin{align}
    V_m = \mu\omega_m \implies \mu = \frac{R_m b}{K_t} + K_e
\end{align}
Thus, the viscous friction coefficient is calculated as:
\begin{align}
    b = \frac{K_t}{R_m}(\mu - K_e)
\end{align}
where the slope $\mu$ is directly measured from the experimental results shown in Fig.~\ref{fig:friction_characteristics}.
\begin{figure}[!ht]
    \centering
    \includegraphics[width=0.5\textwidth]{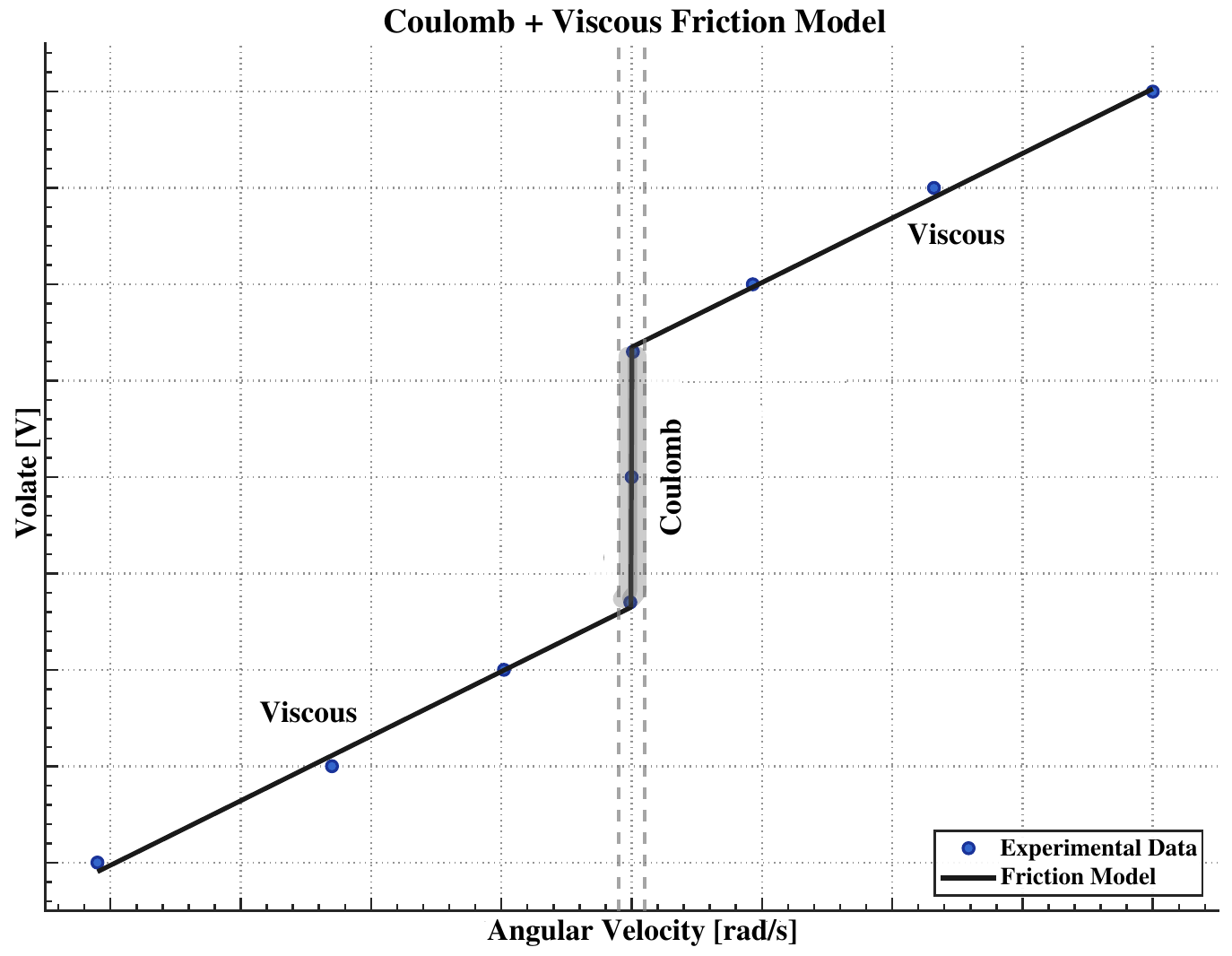}
    \caption{Friction characteristics showing coulomb/viscous regions.}
    \label{fig:friction_characteristics}
\end{figure}

Table~\ref{tab:min_friction} summarizes the measured steady-state velocities for varying input voltages under minimum friction conditions (no external load). The viscous friction coefficient $b_{min}$ was determined using linear regression of the voltage versus velocity data.

\begin{table}[!ht]
    \centering
    \caption{Steady-State Velocities (No External Load)}
    \label{tab:min_friction}
    \begin{tabular}{c|c}
        \hline\hline
        Voltage (V) & Velocity (rad/s) \\
        \hline
        -4 & $-97.7 \pm 0.3$ \\
        -3 & $-72.4 \pm 0.4$ \\
        -2 & $-45.9 \pm 0.2$ \\
        -1 & $-22.1 \pm 0.3$ \\
         1 & $22.5 \pm 0.4$ \\
         2 & $46.3 \pm 0.5$ \\
         3 & $72.4 \pm 0.6$ \\
         4 & $97.7 \pm 0.4$ \\
        \hline\hline
    \end{tabular}
\end{table}

With the slope $\mu_{min} = 0.042492$ obtained from linear regression, the viscous friction coefficient is calculated as
\[
b_{min} = \frac{0.042}{8.4}(\mu_{min} - 0.042) = 2.46 \times 10^{-6}\ \text{N}\cdot\text{m}\cdot\text{s}/\text{rad}
\]

Table~\ref{tab:max_friction} presents the steady-state velocities under maximum friction conditions, where an external load was applied. The slope from regression in this case is $\mu_{max} = 0.076$.

\begin{table}[!ht]
    \centering
    \caption{Steady-State Velocities (External Load Applied)}
    \label{tab:max_friction}
    \begin{tabular}{c|c}
        \hline\hline
        Voltage (V) & Velocity (rad/s) \\
        \hline
        -4 & $-41.0 \pm 1.2$ \\
        -3 & $-23.0 \pm 0.9$ \\
        -2 & $-9.8 \pm 0.7$ \\
         2 & $9.3 \pm 0.6$ \\
         3 & $23.2 \pm 1.1$ \\
         4 & $40.0 \pm 1.3$ \\
         \hline\hline
    \end{tabular}
\end{table}

The maximum friction coefficient is computed as
\[
b_{max} = \frac{0.042}{8.4}(\mu_{max} - 0.042) = 1.63 \times 10^{-4}\ \text{N}\cdot\text{m}\cdot\text{s}/\text{rad}
\]

In summary, the substantial difference between $b_{min}$ and $b_{max}$ highlights the need for adaptive control strategies suited to time-varying friction environments in practical DC motor systems.

\section{Filter Design}
\subsection{Simulation Modeling}
Although the system dynamics are originally described in continuous-time as shown in (\ref{eq:continuous_model}), most estimation and control algorithms are implemented in a discrete-time setting. This discretization is essential for compatibility with digital signal processors and real-time embedded hardware, while maintaining the fundamental behavior of the original system.

To this end, the continuous-time model (\ref{eq:continuous_model}) is discretized to obtain a discrete-time representation employed for both the Kalman filter and controller design, allowing for recursive computation at every sampling instance. In practical applications, modeling inaccuracies, measurement noise, and unexpected disturbances must also be accounted for. This leads to a standard discrete-time stochastic state-space model described as:

\begin{equation}
\begin{cases}
\mathrm{x}_{k+1} &= \Phi \mathrm{x}_k + \mathrm{\Gamma} u_k + \mathrm{w}_k, \label{eq:discrete_state}\\
\mathrm{y}_k &= \mathrm{H} \mathrm{x}_k + \mathrm{v}_k, 
\end{cases}
\end{equation}

where $k$ is the discrete-time index, and $\mathrm{w}_k$ and $\mathrm{v}_k$ represent process and measurement noise, respectively. Both are assumed to be zero-mean Gaussian random vectors with covariance matrices $\mathrm{Q}$ and $\mathrm{R}$, as specified in Section 3. Matrices $\mathrm{\Phi}$,$\mathrm{\Gamma}$,$\mathrm{H}$ are discretized by the forward Euler discretization~\cite{li2013discrete}. This provides a simple yet effective approximation of continuous-time system behavior within each sampling interval.

\subsection{Standard Kalman Filter}
In this section, the IMM filter and KF are constructed based on two distinct vertex models corresponding to the identified minimum and maximum friction values. The rationale and detailed structure of these vertex models are further elaborated in the subsequent LPV systems section. 

The KF is a recursive state estimator that offers optimal estimation for linear systems under the assumption of Gaussian noise~\cite{welch1995introduction, zeng2024state}. Although the system dynamics are initially modeled in continuous-time, the state estimation is carried out in discrete-time. Therefore, the discrete-time formulation of the KF is employed in this work. At each time step, the algorithm performs prediction and correction steps to recursively update the state and its covariance. At each time step, the KF performs prediction and update: \\

\textbf{Prediction:}
\begin{align}
\begin{cases}
\bar{\mathrm{x}}_{k|k-1} &= \mathrm{\Phi} \hat{\mathrm{x}}_{k-1|k-1} + \mathrm{\Gamma}u_k \\
\bar{\mathrm{P}}_{k|k-1} &= \mathrm{\Phi} \hat{\mathrm{P}}_{k-1|k-1} \mathrm{\Phi}^T + \mathrm{Q}
\end{cases}
\end{align}

\textbf{Update:}
\begin{align}
\begin{cases}
\mathrm{r}_k &= z_k - \mathrm{H} \bar{\mathrm{x}}_{k|k-1} \\
\mathrm{S}_k &= \mathrm{H}P_{k|k-1} \mathrm{H}^T + \mathrm{R} \\
\mathrm{K}_k &= \bar{\mathrm{\mathrm{P}}}_{k|k-1} \mathrm{H}^T \mathrm{S}_k^{-1} \\
\hat{\mathrm{x}}_{k|k} &= \hat{\mathrm{x}}_{k|k-1} + \mathrm{K}_k \mathrm{r}_k \\
\hat{\mathrm{P}}_{k|k} &= (\mathrm{I} - \mathrm{K}_k \mathrm{H}) \bar{\mathrm{P}}_{k|k-1} 
\end{cases}
\end{align}

Here, $\bar{\mathrm{x}}_{k|k-1}$ is the predicted state estimate at time step $k$ based on information available up to time step $k-1$ (prior estimate), and $\hat{\mathrm{x}}_{k|k}$ is the updated state estimate at time step $k$ obtained after incorporating the measurement $z_k$ (posterior estimate).

The measurement $z_k$ is acquired from the sensor. The process noise covariance matrix $\mathrm{Q}$ models uncertainties and unmodeled dynamics within the system, while the measurement noise covariance matrix $\mathrm{R}$ accounts for sensor noise and measurement inaccuracies. Choosing appropriate values for $\mathrm{Q}$ and $\mathrm{R}$ is essential: larger $\mathrm{Q}$ values increase the filter's responsiveness to model changes but can also amplify noise sensitivity; smaller $\mathrm{Q}$ values make the filter rely more on the model, reducing adaptability. Conversely, larger $\mathrm{R}$ values lead the filter to trust the model over measurements, while smaller $\mathrm{R}$ values heighten sensitivity to measurement noise~\cite{bavdekar2011identification, huang2017novel}.

\subsection{Interacting Multiple Model}
The IMM filter extends the KF by running multiple parallel KF, each with a different model(e.g. $b_{min}$ and $b_{max}$ for friction coefficient). The IMM algorithm adaptively fuses the state estimates using mode probabilities, enabling robust estimation under abrupt or gradual parameter changes. The IMM algorithm consists of the following steps~\cite{Blom1988, kirubarajan2004kalman,  Kim2020}: \\

\textbf{1. Mixing (Interaction):}
The IMM filter prepares the initial value for the next step by mixing the estimates between multiple modes at each time step. In this process, the previous state estimate and covariance are weighted averaged based on the mode transition probability. 

\begin{align}
\begin{cases}
\mu_{k|k-1}^j &= \sum_{i=1}^{N_v} \Pi_{ij} \mu_{k-1}^i \\
\mu_k^{i|j} &= \frac{\Pi_{ij} \mu_{k-1}^i}{\mu_{k|k-1}^j} \\
\hat{\mathrm{x}}_k^{0j} &= \sum_{i=1}^{N_v} \mu_k^{i|j} \hat{\mathrm{x}}_{k-1}^i \\
\hat{\mathrm{P}}_k^{0j} &= \sum_{i=1}^{N_v} \mu_k^{i|j} \left[ \hat{\mathrm{P}}_{k-1}^i + (\hat{\mathrm{x}}_{k-1}^i - \hat{\mathrm{x}}_k^{0j})(\hat{\mathrm{x}}_{k-1}^i - \hat{\mathrm{x}}_k^{0j})^T \right]
\end{cases}
\end{align} 

where $\Pi_{ij}$ denotes the Markovian mode transition probability from mode $i$ at the previous time step to mode $j$ at the current step $k$, and $\mu_{k-1}^i$ is the mode probability of mode $i$ at the previous step. Here, $N_v$ represents the total number of modes. The term $\mu_{k|k-1}^j$ represents the predicted probability of mode $j$ before incorporating the current measurement, while $\mu_k^{i|j}$ is the mixing probability that quantifies the contribution of mode $i$ to the mixed estimate for mode $j$. The mixed state estimate $\hat{\mathrm{x}}_k^{0j}$ and its associated covariance $\hat{\mathrm{P}}_k^{0j}$ serve as the initial conditions for the filtering operation under mode $j$. These quantities are computed as weighted averages over all feasible mode transitions, enabling consistent interaction among multiple models. \\

\textbf{2. Model-Conditioned Filtering:} Each model $j$ runs a standard KF using the mixed initial state $\hat{\mathrm{x}}_k^{0j}$ and covariance $\hat{\mathrm{P}}_k^{0j}$ from the interaction step. The filtering procedure for each model strictly follows the conventional prediction and update equations as defined in (19)-(25).\\

\textbf{3. Likelihood Calculation:} 
For each model, the innovation residual, its covariance, and the likelihood of the current measurement are computed to assess how well each model explains the observation.
\begin{align}
\begin{cases} 
\mathrm{r}_k^j &= z_k - \mathrm{H}^j \bar{\mathrm{x}}_k^j, \\
\mathrm{S}_k^j &= \mathrm{H}^j \bar{\mathrm{P}}_k^j (\mathrm{H}^j)^T + \mathrm{R},  \\
\Lambda_k^j &= e^{-\frac{1}{2} \left[ d \log(2\pi) + \log \det(\mathrm{S}_k^j) + (\mathrm{r}_k^j)^T (\mathrm{S}_k^j)^{-1} \mathrm{r}_k^j \right]}
\end{cases}
\end{align}
where $\mathrm{r}_k^j$ denotes the measurement residual for mode $j$, defined as the difference between the actual measurement $z_k$ and the predicted measurement $\mathrm{H}^j \bar{\mathrm{x}}_k^j$. The innovation covariance $\mathrm{S}_k^j$ combines the predicted estimation uncertainty $\bar{\mathrm{P}}_k^j$ with the measurement noise covariance $\mathrm{R}$, providing a measure of total uncertainty in the innovation. The scalar $\Lambda_k^j$ represents the likelihood of the measurement under mode $j$, computed assuming a Gaussian distribution of the residual. Here, $d$ is the dimension of the observation vector $\mathrm{y}_k$, and thus also the size of $\mathrm{S}_k^j$. The likelihood expression is derived from the Gaussian probability density function. \\

\textbf{4. Mode Probability Update:}
Using the likelihood $\Lambda^j_{k}$ calculated in the previous step, the mode probability $\mu^j_k$ for each model j is updated based on Bayes' rule, as shown in (21).
To enhance numerical stability and prevent underflow in likelihood calculations, the log-likelihood form is applied in practice.
\begin{equation}
\mu_k^j = \frac{\Lambda_k^j \mu_{k|k-1}^j}{\sum_{l=1}^{N_v} \Lambda_k^l \mu_{k|k-1}^l}
\end{equation}
Here, $\mu_k^j$ denotes the updated posterior mode probability of mode $j$ at time step $k$, incorporating the current measurement $y_k$ via the likelihood $\Lambda_k^j$. This term reflects how likely each mode is at the current step, after accounting for the observed data. In contrast to the one-step predicted mode probability $\mu_{k|k-1}^j$, which is computed purely from the Markov transition probabilities, the updated $\mu_k^j$ combines both the prior information and the current measurement, serving as the final probability in mode $j$ at time $k$. All mode probabilities $\mu_k^j$ satisfy $0 \leq \mu_k^j \leq 1$ and $\sum_{j=1}^n \mu_k^j = 1$.

\begin{remark}
The updated mode probabilities \( \mu^j_k \) are subsequently used as scheduling weights for online gain synthesis in the MAPS controller. This integration directly links state estimation with adaptive control, enabling the controller to respond in real time to dynamic mode transitions. 
\end{remark}

\textbf{5. Estimate Combination:}
Finally, the overall state estimate and covariance are calculated as the weighted sums across all models using the updated mode probabilities. 
\begin{align}
\begin{cases}
\hat{\mathrm{x}}_k &= \sum_{j=1}^{N_v} \mu_k^j \hat{\mathrm{x}}_k^j \\
\hat{\mathrm{P}}_k &= \sum_{j=1}^{N_v} \mu_k^j \left[ \hat{\mathrm{P}}_k^j + (\hat{\mathrm{x}}_k^j - \hat{\mathrm{x}}_k)(\hat{\mathrm{x}}_k^j - \hat{\mathrm{x}}_k)^T \right]
\end{cases}
\end{align}
In this process, $\hat{\mathrm{x}}_k$ represents the fused state estimate, obtained as the weighted average of the individual state estimates $\hat{\mathrm{x}}_k^j$ from each mode. The weights $\mu_k^j$ reflect the updated probability in each mode at time $k$, derived from the likelihood-informed Bayesian update. The total covariance $\hat{\mathrm{P}}_k$ accounts for both the within-mode uncertainty $\hat{\mathrm{P}}_k^j$ and the cross-mode deviation of each $\hat{\mathrm{x}}_k^j$ relative to the overall estimate $\hat{\mathrm{x}}_k$, capturing the overall estimation uncertainty in a statistically consistent manner. This final estimate serves as the unified output of the IMM filter at time $k$. 

The mode transition probability matrix $\Pi$ is tuned empirically. 

\begin{equation}
\Pi = \begin{bmatrix}
\Pi_{11} & \Pi_{12} & \cdots & \Pi_{1n} \\
\Pi_{21} & \Pi_{22} & \cdots & \Pi_{2n} \\
\vdots & \vdots & \ddots & \vdots \\
\Pi_{n1} & \Pi_{n2} & \cdots & \Pi_{nn}
\end{bmatrix}
\end{equation}
The main diagonal elements indicate a high probability of remaining in the same friction mode at each time step.

\section{Controller Design}
This section describes the design of the MAPS framework, which combines mode probability estimates from an IMM filter with gain scheduling in a LPV controller. By using these probabilities as interpolation weights for real-time state feedback gain synthesis, the proposed approach enables adaptive control under time-varying uncertainties such as friction. Theoretical stability of the closed-loop system is also addressed.

\subsection{Discrete-Time Linear Quadratic Regulator Theory}
The discrete-time LQR is a fundamental optimal control technique for linear systems in discrete-time.  
Its goal is to find a state feedback gain \( \mathrm{K} \) that minimizes the infinite-horizon quadratic cost function:
\begin{equation}
J = \sum_{k=0}^\infty \left( \mathrm{x}_k^T \mathrm{Q}_{LQR} \mathrm{x}_k + u_k^T \mathrm{R}_{LQR} u_k \right),
\end{equation}
where \( \mathrm{Q}_{LQR} \succeq 0 \) is a positive semi-definite state weighting matrix, and \( \mathrm{R}_{LQR} \succ 0 \) is a positive definite control weighting matrix.

The solution to the discrete-time LQR problem involves solving the discrete algebraic Riccati equation~\cite{liu2014adaptive, zhang2015lqr}:
\begin{equation}
\mathrm{P} = \mathrm{\Phi}^T \mathrm{P} \mathrm{\Phi} - \mathrm{\Phi}^T \mathrm{P} \mathrm{\Gamma} \left( \mathrm{R}_{LQR} + \mathrm{\Gamma}^T \mathrm{P} \mathrm{\Gamma} \right)^{-1} \mathrm{\Gamma}^T \mathrm{P} \mathrm{\Phi} + \mathrm{Q}_{LQR},
\end{equation}
where \( \mathrm{P} \) is the unique positive semi-definite solution.

The optimal state feedback control law minimizing \( J \) is given by:
\begin{equation}
u_k = \mathrm{K} \mathrm{x}_k, \quad \text{where} \quad \mathrm{K} = (\mathrm{R}_{LQR} + \mathrm{\Gamma}^T \mathrm{P} \mathrm{\Gamma})^{-1} \mathrm{\Gamma}^T \mathrm{P}\mathrm{\Phi}.
\end{equation}

It should be noted that, in this paper, the LQR gains were precomputed based on the nominal system dynamics without explicitly formulating the error-based (tracking) cost function~\cite{li2010torque,ruderman2008optimal}. This approach reflects a practical scenario in which the reference trajectory is unknown or varies arbitrarily during operation, making it difficult to design a dedicated error-based regulator a priori. Therefore, the controller was synthesized using state-based LQR gains, and the actual implementation utilizes the feedback of the reference error, \( \mathrm{e}_k = \mathrm{x}_{\mathrm{ref},k} - \hat{\mathrm{x}}_k \), to accommodate arbitrary reference commands. Empirically, this design choice was found to provide satisfactory tracking performance across a broad range of reference trajectories in HILS experiments.

For trajectory tracking or reference regulation purposes, the control input is often based on the error between the reference and the estimated system state, expressed as~\cite{gattami2009generalized, olalla2009robust}:
\begin{align}
\begin{cases}
\mathrm{e}_k = \mathrm{x}_{\mathrm{ref}, k} - \hat{\mathrm{x}}_k, \\
u_k = \mathrm{K} \mathrm{e}_k = \mathrm{K} (\mathrm{x}_{\mathrm{ref}, k} - \hat{\mathrm{x}}_k),
\end{cases}
\end{align}
where \( \mathrm{x}_{\mathrm{ref}, k} \in \mathbb{R}^n \) is the desired reference state at discrete-time \( k \), \( \hat{\mathrm{x}}_k \in \mathbb{R}^n \) is the estimated state obtained from an observer or estimator, and \( u_k \in \mathbb{R}^m \) is the control input applied to the plant. The matrix \( \mathrm{K} \in \mathbb{R}^{m \times n} \) is the discrete-time LQR gain designed to regulate the system based on the error \( \mathrm{e}_k \in \mathbb{R}^n\).

As in the continuous-time case, the choice of \( \mathrm{Q}_{LQR} \) and \( \mathrm{R}_{LQR} \) significantly influences the closed-loop system performance, stability, and robustness. A larger \( \mathrm{Q}_{LQR} \) penalizes deviations in the state more heavily, promoting aggressive regulation, whereas a larger \( \mathrm{R}_{LQR} \) penalizes control effort, yielding smoother but potentially slower responses.

\subsection{Linear Parameter Varying System}
A LPV system can be described as a convex combination of local linear time-invariant (LTI) models defined at the vertices of a parameter polytope~\cite{ Bamieh1999, White2016, Kang2018, Seo2022, Fenyes2023}. Let the scheduling parameter vector be $\rho_k \in \mathcal{P} \subset \mathbb{R}^{n_p}$, where $\mathcal{\mathrm{P}}$ is a convex polytope defined by its $N_v$ vertices $\{\rho^{[1]}, \ldots, \rho^{[N_v]}\}$. Here, $n_p$ denotes the dimension of the scheduling parameter vector, and $N_v$ is the number of vertices of the polytope. The state space representation is given by~\cite{heemels2010observer}:
\begin{align}
    {\mathrm{x}}_{k+1} &= \mathrm{\Phi}(\rho_k) \mathrm{x}_k + \mathrm{\Gamma}u_k  , & \mathrm{y}_k = \mathrm{H}\mathrm{x}_k,
    \label{eq:lpv_state}
\end{align}

where \( \mathrm{x}_k \in \mathbb{R}^{n} \) is the state vector, \( u_k \in \mathbb{R} \) is the input vector, and the output \( \mathrm{y}_k \in \mathbb{R} \) is a scalar.

In this paper, \(\mathrm{\Phi}\) denotes the discrete-time state transition matrix for the DC motor system. The \(\mathrm{\Phi}\) depends on the viscous friction coefficient, which varies in practical scenarios and introduces uncertainty into the system. To capture this variation, the system is modeled using a vertex model approach, where vertices correspond to system matrices constructed at experimentally identified extreme values of the viscous friction coefficient.

Specifically, the nominal system matrix \(\mathrm{\Phi}\) is defined using a representative nominal friction coefficient \(b_m\), representing the typical operating condition of the DC motor. However, since the friction varies over a range, two \textbf{vertex system matrices} \(\mathrm{\Phi}^{[1]}\) and \(\mathrm{\Phi}^{[2]}\) are constructed based on the identified minimum \(\underline{\rho} = b_{\min}\) and maximum \(\overline{\rho} = b_{\max}\) viscous friction coefficient vertices, respectively. These vertex models define the bounds of the system's operating range under parameter uncertainty. This polytopic representation ensures that any system behavior within the friction coefficient range can be accurately captured through a convex combination of these vertex system matrices.

The polytopic decomposition can be represented such that

\begin{align*}
\mathrm{\Phi}^{[1]} = \mathrm{\Phi}(0) + \underline{\rho}\hat{\mathrm{\Phi}}  \\
\mathrm{\Phi}^{[2]} = \mathrm{\Phi}(0) + \overline{\rho}\hat{\mathrm{\Phi}} 
\end{align*}

where
\begin{align*}
\Phi(0) =
\begin{bmatrix}
1 & T & 0 \\
0 & 1 & \frac{T \, K_t}{J_{eq}} \\
0 & -\frac{T \, K_e}{L_m} & 1 - \frac{T \, R_m}{L_m}
\end{bmatrix}
,\quad
\hat{\Phi} =
\begin{bmatrix}
0 & 0 & 0 \\
0 & -\frac{T \rho}{J_{eq} } & 0 \\
0 & 0 & 0
\end{bmatrix}
\end{align*}

where \( T = 0.002\, [s] \) is the fixed sampling time.  
\(\Phi(0)\) is the nominal system matrix representing the dynamics independent of the varying parameter \(\rho\).  
\(\hat{\Phi}\) is the fixed nodal matrix corresponding to the parameter-dependent variation associated with \(\rho\).  

The system matrix $\mathrm{\Phi}(\rho_k)$ are expressed as convex combinations of the matrices at the vertices:

\begin{equation}
\label{eq:lpv_phi}
\mathrm{\Phi}(\rho_k) = \sum_{i=1}^{N_v} \xi_i(\rho_k) \mathrm{\Phi}^{[i]},
\end{equation}

where $\mathrm{\Phi}^{[i]}$ are the system matrices at the $i$-th vertex, and the weighting functions $\xi_i(\rho_k)$ satisfy.

\begin{align}
    \xi_i(\rho_k) \geq 0, \quad \sum_{i=1}^{N_v} \xi_i(\rho_k) = 1,
\end{align}

The weighting functions $\xi_i(\rho_k)$ are typically determined by multilinear interpolation based on the current value of the scheduling parameter $\rho_k$ within the polytope. For efficient computation of the weighting functions $\xi_i(\rho_k)$ in LPV systems, we employ a matrix-based approach using barycentric coordinates~\cite{Kang2018}. 

Given the vertex matrix $\mathrm{V} \in \mathbb{R}^{(n_p+1) \times N_v}$ defined as
\begin{align}
    \mathrm{V} = \begin{bmatrix}
        \rho_1^{[1]} & \rho_1^{[2]} & \cdots & \rho_1^{[N_v]} \\
        \rho_2^{[1]} & \rho_2^{[2]} & \cdots & \rho_2^{[N_v]} \\
        \vdots & \vdots & \ddots & \vdots \\
        \rho_s^{[1]} & \rho_s^{[2]} & \cdots & \rho_s^{[N_v]} \\
        1 & 1 & \cdots & 1
    \end{bmatrix},
\end{align}

In this paper, the LPV system is represented as a convex combination of two models, corresponding to the minimum and maximum values of the scheduling parameter. The vertex matrix $\mathrm{V}$ in this case is given by:

\begin{equation}
\mathrm{V} = \begin{bmatrix}
    \overline{\rho} & \underline{\rho} \\
    1 & 1
\end{bmatrix} \in \mathbb{R}^{2 \times 2},
\end{equation}

the weighting function vector $\xi(\rho_k) = [\xi_1(\rho_k), \ldots, \xi_{N_v}(\rho_k)]^T$ is computed by solving the linear system

\begin{align}
    \mathrm{V}^T \xi(\rho_k) = \begin{bmatrix} \rho_k \\ 1 \end{bmatrix},
\end{align}

which yields

\begin{align}
    \xi(\rho_k) = (\mathrm{V}^T)^{-1} \begin{bmatrix} \rho_k \\ 1 \end{bmatrix}.
\end{align}

Vertex \( \mathrm{V} \) is uniquely defined based on the extreme values of the scheduling parameter, such as the identified minimum and maximum friction coefficients. Here, the index \( i \in \{1, \ldots, N_v\} \) corresponds to the two vertex models representing the lower and upper bounds of the parameter range, respectively. Since these values correspond to physical system bounds, the resulting vertex dynamics are fixed and known \textit{a priori}. This ensures that the associated LQR gains \( \mathrm{K}^{[i]} \) are consistently derived for a well-defined set of models within the LPV polytope.

\section{Closed-loop Stability Analysis of MAPS}

To formalize the closed-loop stability of the proposed MAPS-gain framework, we present two theorems based on the theory of discrete-time LPV systems. The first theorem ensures quadratic stability under the IMM-based gain scheduling strategy. The second theorem extends this guarantee by showing that exponential stability is preserved even under bounded parameter mismatch. Recall that the system matrix $\mathrm{\Phi}(\hat{\rho}_k)$ is constructed as a convex combination of vertex models as defined in (29), enabling the use of polytopic LPV stability tools.

\begin{lemma}[Quadratic Stability of Polytopic Vertex Systems]
Let $\mathrm{\Phi}^{[i]}$ be the vertex systems of a discrete-time polytopic LPV model. Assume that for each $i$, a stabilizing state feedback gain $K^{[i]}$ is designed such that the following inequality holds:
\begin{equation}
(\mathrm{\Phi}^{[i]} + \mathrm{\Gamma} \mathrm{K}^{[i]})^\top \mathrm{P} (\mathrm{\Phi}^{[i]} + \mathrm{\Gamma} \mathrm{K}^{[i]}) - \mathrm{P} \prec 0,\quad \forall i = 1, \ldots, N_v
\label{eq:common_lyap_discrete}
\end{equation}
for some symmetric positive definite matrix $P \succ 0$. Then, any convex combination of these systems, defined as
\begin{equation}
\mathrm{\Phi}^{\text{cl}}_{k} = \sum_{i=1}^{N_v} \mu_k^{(i)} (\mathrm{\Phi}^{[i]} + \mathrm{\Gamma} \mathrm{K}^{[i]}),\quad \sum_{i=1}^{N_v} \mu_k^{(i)} = 1,\; \mu_k^{(i)} \geq 0,
\end{equation}
is quadratically stable. Here, $\mathrm{\Phi}^{\text{cl}}_{k}$ denotes the convex combination of the time-varying closed-loop state transition matrices at discrete-time $k$.
\end{lemma}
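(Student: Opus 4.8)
The plan is to show that the quadratic function $V(\mathrm{x}) = \mathrm{x}^\top \mathrm{P}\, \mathrm{x}$ is a common Lyapunov function for the interpolated system, i.e.\ to establish the single matrix inequality $(\mathrm{\Phi}^{\text{cl}}_k)^\top \mathrm{P}\, \mathrm{\Phi}^{\text{cl}}_k - \mathrm{P} \prec 0$ for every admissible weight vector $(\mu_k^{(1)}, \ldots, \mu_k^{(N_v)})$ in the probability simplex. Writing $\mathrm{\Phi}^{[i]}_{\mathrm{cl}} := \mathrm{\Phi}^{[i]} + \mathrm{\Gamma}\mathrm{K}^{[i]}$, hypothesis \eqref{eq:common_lyap_discrete} states exactly that $(\mathrm{\Phi}^{[i]}_{\mathrm{cl}})^\top \mathrm{P}\, \mathrm{\Phi}^{[i]}_{\mathrm{cl}} - \mathrm{P} \prec 0$ at each vertex. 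The difficulty is that $A \mapsto A^\top \mathrm{P} A$ is \emph{quadratic}, so substituting $\mathrm{\Phi}^{\text{cl}}_k = \sum_i \mu_k^{(i)} \mathrm{\Phi}^{[i]}_{\mathrm{cl}}$ and expanding produces cross terms $\mu_k^{(i)}\mu_k^{(j)} (\mathrm{\Phi}^{[i]}_{\mathrm{cl}})^\top \mathrm{P}\, \mathrm{\Phi}^{[j]}_{\mathrm{cl}}$ with $i \neq j$, which are not individually sign-controlled by the per-vertex hypothesis. Neutralizing these cross terms is the main obstacle, and a naive term-by-term argument will fail.

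The key step that resolves this is the matrix (operator) convexity of $A \mapsto A^\top \mathrm{P} A$, which holds because $\mathrm{P} \succ 0$. Factoring $\mathrm{P} = \mathrm{L}^\top \mathrm{L}$ and applying the matrix Jensen inequality for the convex weights $\mu_k^{(i)} \geq 0$, $\sum_i \mu_k^{(i)} = 1$, I would obtain
\[
(\mathrm{\Phi}^{\text{cl}}_k)^\top \mathrm{P}\, \mathrm{\Phi}^{\text{cl}}_k = \Big(\sum_{i} \mu_k^{(i)} \mathrm{\Phi}^{[i]}_{\mathrm{cl}}\Big)^\top \mathrm{P} \Big(\sum_{i} \mu_k^{(i)} \mathrm{\Phi}^{[i]}_{\mathrm{cl}}\Big) \preceq \sum_{i=1}^{N_v} \mu_k^{(i)} (\mathrm{\Phi}^{[i]}_{\mathrm{cl}})^\top \mathrm{P}\, \mathrm{\Phi}^{[i]}_{\mathrm{cl}}.
\]
This converts the quadratic-in-the-sum quantity into a convex combination of the per-vertex quadratics, absorbing every cross term into a positive semidefinite slack (equivalently, the nonnegativity of the covariance of the $\mathrm{L}\,\mathrm{\Phi}^{[i]}_{\mathrm{cl}}$ under the weights $\mu_k^{(i)}$).

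I would then invoke the hypothesis at each vertex, $(\mathrm{\Phi}^{[i]}_{\mathrm{cl}})^\top \mathrm{P}\, \mathrm{\Phi}^{[i]}_{\mathrm{cl}} \prec \mathrm{P}$, and use $\sum_i \mu_k^{(i)} = 1$ to bound the right-hand side by $\sum_i \mu_k^{(i)} \mathrm{P} = \mathrm{P}$, giving the strict chain $(\mathrm{\Phi}^{\text{cl}}_k)^\top \mathrm{P}\, \mathrm{\Phi}^{\text{cl}}_k \preceq \sum_i \mu_k^{(i)} (\mathrm{\Phi}^{[i]}_{\mathrm{cl}})^\top \mathrm{P}\, \mathrm{\Phi}^{[i]}_{\mathrm{cl}} \prec \mathrm{P}$. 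Hence $(\mathrm{\Phi}^{\text{cl}}_k)^\top \mathrm{P}\, \mathrm{\Phi}^{\text{cl}}_k - \mathrm{P} \prec 0$, so $V(\mathrm{x}_{k+1}) - V(\mathrm{x}_k) = \mathrm{x}_k^\top [(\mathrm{\Phi}^{\text{cl}}_k)^\top \mathrm{P}\, \mathrm{\Phi}^{\text{cl}}_k - \mathrm{P}] \mathrm{x}_k < 0$ for all $\mathrm{x}_k \neq 0$, establishing quadratic stability with the common Lyapunov matrix $\mathrm{P}$. Since the weights lie anywhere in the simplex, the conclusion holds uniformly over all time-varying schedules $\mu_k^{(i)}$, which is precisely what quadratic stability of the polytope demands.

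As an alternative that sidesteps the convexity lemma, I note that \eqref{eq:common_lyap_discrete} can be linearized by a Schur complement into an LMI that is \emph{affine} in $\mathrm{\Phi}^{[i]}_{\mathrm{cl}}$; convex combinations then preserve the LMI immediately, and a second Schur complement recovers the desired inequality. Either route isolates the same essential point: the passage from vertexwise stability to stability of the entire polytope rests on a convexity/linearity structure that cancels the troublesome cross terms.
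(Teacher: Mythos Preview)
Your proof is correct and follows essentially the same route as the paper: the paper's argument is a one-line appeal to ``convexity of quadratic forms and the discrete-time Lyapunov inequality,'' and you have simply spelled out what that means---namely the operator-Jensen bound $(\sum_i \mu_k^{(i)} \mathrm{\Phi}^{[i]}_{\mathrm{cl}})^\top \mathrm{P}\,(\sum_i \mu_k^{(i)} \mathrm{\Phi}^{[i]}_{\mathrm{cl}}) \preceq \sum_i \mu_k^{(i)} (\mathrm{\Phi}^{[i]}_{\mathrm{cl}})^\top \mathrm{P}\,\mathrm{\Phi}^{[i]}_{\mathrm{cl}}$ followed by the vertexwise hypothesis. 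Your Schur-complement alternative is a nice addition not present in the paper, but the core mechanism is identical.
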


The inequality in \eqref{eq:common_lyap_discrete} ensures that all vertex systems are individually quadratically stable under a common discrete-time Lyapunov function $V(\mathrm{x})= \mathrm{x}^\top \mathrm{P} \mathrm{x}$. This condition implies that any convex combination of the closed-loop systems such as $\Phi^{\text{cl}}_{k}$ is also stable for all admissible scheduling weights $\mu_k^{(i)}$ within the probability simplex.

\begin{proof}
This follows directly from convexity of quadratic forms and the discrete-time Lyapunov inequality. Since the inequality \eqref{eq:common_lyap_discrete} holds for all vertices and the weighting vector $\mu_k$ lies within the probability simplex $\mathcal{S}^{N_v}$, the convex combination also satisfies
\begin{equation}
(\mathrm{\Phi}^{\text{cl}}_{k})^\top \mathrm{P} \mathrm{\Phi}^{\text{cl}}_{k} - \mathrm{P} \prec 0
\end{equation}
for all admissible $\mu_k$, ensuring discrete-time quadratic stability.
\end{proof}

We consider that the parameter-dependent system and control matrices are modeled as convex combinations of vertex models. Specifically, the system state and gain at each time $k$ are expressed as weighted sums of vertex states and gains, where the weights $\mu_k^{(i)}$ which are the mode probabilities directly correspond to the states ${\xi}_i(\rho_k)$ of those vertices. In other words, the weighting factors and vertex states coincide exactly, yielding the parameter-dependent representation. Note that the expression in~(\ref{eq:lpv_phi}) can be equivalently represented as in~(\ref{eq:lpv_mu}).

\begin{equation}
\begin{aligned}
\label{eq:lpv_mu}
\mathrm{\Phi}(\rho_k) &= \sum_{i=1}^{N_v} \mu_k^{(i)} \mathrm{\Phi}^{[i]}, \\
\mathrm{K}(\rho_k) &= \sum_{i=1}^{N_v} \mu_k^{(i)} \mathrm{K}^{[i]},
\end{aligned}
\end{equation}
where \( \mu_k^{(i)} \in [0,1] \) are scheduling weights that satisfy
\[
\sum_{i=1}^{N_v} \mu_k^{(i)} = 1.
\]
This implies that the closed-loop matrix can also be expressed as a convex combination:
\begin{equation}
\mathrm{\Phi}_{\text{cl}}(\rho_k) = \mathrm{\Phi}(\rho_k) + \mathrm{\Gamma} \mathrm{K}(\rho_k) = \sum_{i=1}^{N_v} \mu_k^{(i)} (\mathrm{\Phi}^{[i]} + \mathrm{\Gamma} \mathrm{K}^{[i]}).
\end{equation}
Such a structure naturally fits within the polytopic LPV framework, allowing the use of a common Lyapunov function for stability analysis.

\begin{theorem}[Stability of MAPS-Gain Scheduled LPV Control]
Suppose that the mode probabilities $\mu_k^{(i)}$ at each discrete time step $k$ are generated by an IMM filter consisting of $N_v$ models. If each vertex system $\mathrm{\Phi}^{[i]}$ is stabilized by a gain $K^{[i]}$ satisfying the conditions in Lemma 1, then the MAPS-gain controller under perfect scheduling:
\begin{equation}
    \mathrm{K}({\rho}_k) = \sum_{i=1}^{N_v} \mu_k^{(i)} \mathrm{K}^{[i]}
\end{equation}
ensures quadratic stability of the closed-loop system:
\begin{equation}
    \mathrm{x}_{k+1} = (\mathrm{\Phi}({\rho}_k) + \mathrm{\Gamma} K({\rho}_k)) \mathrm{x}_k
\end{equation}
under all possible mode transitions modeled by the IMM.
\end{theorem}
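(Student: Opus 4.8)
The plan is to treat this theorem as a direct corollary of Lemma 1, where the only genuinely new content is verifying that the IMM-generated weights legitimately place the closed-loop matrix inside the polytope to which that lemma applies.

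First I would record that the posterior mode probabilities $\mu_k^{(i)}$ produced by the mode-probability update are normalized Bayesian weights, so that $\mu_k^{(i)} \ge 0$ and $\sum_{i=1}^{N_v}\mu_k^{(i)} = 1$ at every discrete time $k$; hence the weight vector $\mu_k$ lies in the probability simplex $\mathcal{S}^{N_v}$ for all $k$, regardless of the particular sequence of mode transitions realized by the filter. Second, under the perfect-scheduling hypothesis the same weights appear in both plant and controller, so by the affine-combination identity established immediately before the theorem the closed-loop transition matrix can be written as
\[
\mathrm{\Phi}^{\text{cl}}_k = \mathrm{\Phi}(\rho_k) + \mathrm{\Gamma}\mathrm{K}(\rho_k) = \sum_{i=1}^{N_v}\mu_k^{(i)}\bigl(\mathrm{\Phi}^{[i]}+\mathrm{\Gamma}\mathrm{K}^{[i]}\bigr),
\]
which is exactly the convex combination appearing in Lemma 1. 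Because each vertex gain $\mathrm{K}^{[i]}$ satisfies the common-Lyapunov inequality \eqref{eq:common_lyap_discrete} for a shared $\mathrm{P}\succ 0$, Lemma 1 applies and yields $(\mathrm{\Phi}^{\text{cl}}_k)^\top \mathrm{P}\,\mathrm{\Phi}^{\text{cl}}_k - \mathrm{P} \prec 0$ for the specific $\mu_k$ active at step $k$.

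Third, I would close the argument with the Lyapunov candidate $V(\mathrm{x}) = \mathrm{x}^\top \mathrm{P}\,\mathrm{x}$ and evaluate the one-step difference along the closed loop,
\[
V(\mathrm{x}_{k+1}) - V(\mathrm{x}_k) = \mathrm{x}_k^\top\bigl[(\mathrm{\Phi}^{\text{cl}}_k)^\top \mathrm{P}\,\mathrm{\Phi}^{\text{cl}}_k - \mathrm{P}\bigr]\mathrm{x}_k < 0 \quad \text{for } \mathrm{x}_k \neq 0.
\]
The decisive point is that a single matrix $\mathrm{P}$ certifies this decrease at every $k$: since the same Lyapunov function serves every admissible weight vector, the decrease is uniform over all scheduling sequences, so the origin is quadratically stable irrespective of how the IMM transitions among modes — which is precisely the ``under all possible mode transitions'' assertion.

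The part I expect to demand the most care is not this short chain but making explicit why arbitrary, possibly non-smooth time variation of $\mu_k$ causes no difficulty: the common-$\mathrm{P}$ structure is exactly what decouples the stability certificate from the scheduling dynamics, so no assumption on the transition matrix $\Pi$ or on the rate of change of $\mu_k$ is required. I would also flag that the one inequality doing the real work — that a convex combination of matrices $\mathrm{M}_i$ each satisfying $\mathrm{M}_i^\top \mathrm{P}\,\mathrm{M}_i - \mathrm{P} \prec 0$ again satisfies it — is quadratic in $\mathrm{M}_i$ and therefore does not follow from convexity alone; the clean justification (belonging to Lemma 1 rather than here) is a Schur-complement linearization that renders the vertex conditions affine before the convex combination is taken. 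With Lemma 1 in hand, the theorem then follows with only the bookkeeping above.
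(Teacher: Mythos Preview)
Your proposal is correct and matches the paper's approach exactly: the paper's proof is a two-line appeal to Lemma~1 after noting that the IMM weights $\mu_k$ lie in the simplex $\mathcal{S}^{N_v}$, and you have simply written out that argument in full. Your additional observations---that the common-$\mathrm{P}$ structure is what makes the certificate independent of the scheduling dynamics, and that the convex-combination step in Lemma~1 itself is quadratic and properly justified via a Schur-complement linearization---are valid refinements that the paper glosses over, but they do not change the route.
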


\begin{proof}
Since the IMM mode probabilities $\mu_k$ remain within the convex simplex $\mathcal{S}^{N_v}$, and the feedback gain is a convex combination of stabilizing gains, the closed-loop system matrix $\mathrm{\Phi}^{\text{cl}}_k$ remains quadratically stable by Lemma 1.
\end{proof}

\begin{remark}
The MAPS-gain framework exploits a convex combination of precomputed stabilizing gains according to the IMM estimated mode probabilities. This avoids online Riccati computation and guarantees robust quadratic stability under mode switching and model uncertainty.
\end{remark}

However, since the parameter is estimated via the IMM filter, a mismatch may exist between the estimated value \( \hat{\rho} \) and the true parameter \( \rho \). Therefore, it is necessary to establish that the closed-loop system remains stable under the following assumptions, even in the presence of such estimation errors. In the proposed framework, the estimated scheduling parameter \( \hat{\rho}_k \) used for gain selection is obtained directly from the IMM mode probabilities. Specifically,
\begin{equation}
    \hat{\rho}_k := \sum_{i=1}^{N_v} \mu_k^{(i)} \rho^{[i]},
\end{equation}
where \( \rho^{[i]} \) is the representative scheduling parameter value associated with the $i$-th vertex model, and \( \mu_k^{(i)} \in [0,1] \) are the IMM-derived mode probabilities such that
\[
\sum_{i=1}^{N_v} \mu_k^{(i)} = 1.
\]
This expression links the probabilistic output of the IMM filter to the continuous scheduling variable required by the LPV framework.
To formally guarantee stability in the presence of scheduling mismatch, we introduce the following assumptions:
\begin{remark}
In the MAPS framework, both the system and control matrices are constructed as convex combinations of precomputed vertex models:
\[
\mathrm{\Phi}(\rho_k) = \sum_{i=1}^{N_v} \mu_k^{(i)} \mathrm{\Phi}^{[i]}, \quad
\mathrm{K}(\rho_k) = \sum_{i=1}^{N_v} \mu_k^{(i)} \mathrm{K}^{[i]},
\]
where the weights $\mu_k^{(i)}$ form a convex simplex. This structure ensures that $\mathrm{\Phi}(\cdot)$ and $K(\cdot)$ inherit Lipschitz continuity and stability properties from the vertex systems.
\end{remark}

\begin{assumption}[Bounded Estimation Error]
There exists $\epsilon > 0$ such that $\|\hat{\rho}_k - \rho_k\| < \epsilon$ for all $k$.
\end{assumption}

This assumption reflects that the scheduling parameter $\hat{\rho}_k$ estimated by the IMM filter has a uniformly bounded deviation from the true parameter $\rho_k$. 
In practical systems, the estimation error can be kept small through adequate sensor fusion and model calibration, making this condition realistic.

\begin{assumption}[Slowly Varying Parameters]
There exists $\delta > 0$ such that $\|\rho_{k+1} - \rho_k\| < \delta$ for all $k$.
\end{assumption}

This condition implies that the underlying physical parameter $\rho_k$ evolves gradually over time rather than changing abruptly. 
Such slow variations are common in systems like vehicle dynamics or battery state modeling, where the scheduling parameters (e.g., velocity, road friction) typically change smoothly.

\begin{assumption}[Lipschitz Continuity]
The mappings $\mathrm{\Phi}(\cdot)$ and $\mathrm{K}(\cdot)$ are Lipschitz continuous:
\begin{equation}
\begin{split}
\|\mathrm{\Phi}(\rho_1) - \mathrm{\Phi}(\rho_2)\| &\le L_\mathrm{\Phi} \|\rho_1 - \rho_2\|, \\
\|\mathrm{K}(\rho_1) - \mathrm{K}(\rho_2)\| &\le L_\mathrm{K} \|\rho_1 - \rho_2\|.
\end{split}
\end{equation}
\end{assumption}

Lipschitz continuity ensures that small changes in the scheduling parameter induce proportionally small variations in the system matrices and controller gains. 
This property is typically satisfied in parameter-dependent models derived via interpolation or offline gain scheduling, and is crucial for robust LPV control synthesis.

\begin{assumption}[Common Lyapunov Stability]
There exists $P \succ 0$ and $\alpha > 0$ such that
\[
\mathrm{\Phi}_{\text{cl}}(\rho)^\top \mathrm{P} \mathrm{\Phi}_{\text{cl}}(\rho) - \mathrm{P} \prec -\alpha \mathrm{I}, \quad \forall \rho,
\]
where $\mathrm{\Phi}_{\text{cl}}(\rho) := \mathrm{\Phi}(\rho) + \mathrm{\Gamma} \mathrm{K}(\rho)$.
\end{assumption}

This assumption asserts the existence of a common Lyapunov function for the entire family of closed-loop systems over the parameter space. 
Although conservative, it is widely used in polytopic LPV frameworks and is achievable by designing gains over a finite set of vertex models with convex stability guarantees.

\begin{theorem}[Exponential Stability of MAPS-Gain LPV Control under Parameter Estimation Error]
Define the closed-loop system matrix as:
\[
\mathrm{\Phi}_{\text{cl}}(\hat{\rho}_k) := \mathrm{\Phi}(\hat{\rho}_k) + \mathrm{\Gamma} \mathrm{K}(\hat{\rho}_k)
\]
Then the system evolves as:
\[
\mathrm{x}_{k+1} = \mathrm{\Phi}_{\text{cl}}(\hat{\rho}_k) \mathrm{x}_k
\]
Under Assumptions 1--4, consider the discrete-time LPV system
\[
    \mathrm{x}_{k+1} = \mathrm{\Phi}(\hat{\rho}_k)\,\mathrm{x}_k + \mathrm{\Gamma}\,\mathrm{K}(\hat{\rho}_k)\,\mathrm{x}_k,
\]
where $\hat{\rho}_k$ is an estimated scheduling parameter and the true value is $\rho_k$. 
There exist constants \( \epsilon^*>0 \) and \( \delta^*>0 \) such that if
\[
\|\hat{\rho}_k - \rho_k\| < \epsilon^*, \quad \|\rho_{k+1} - \rho_k\| < \delta^*, \quad \forall k,
\]
then the origin \( \mathrm{x} = 0 \) of the system is exponentially stable. That is, there exist constants \( C > 0 \) and \( 0 < \lambda < 1 \) such that
\[
\|\mathrm{x}_k\| \le C\,\lambda^k\,\|\mathrm{x}_0\|, \quad \forall k.
\]
\end{theorem}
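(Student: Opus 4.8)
The plan is to certify exponential stability with the single common quadratic Lyapunov function $V(\mathrm{x}) = \mathrm{x}^\top \mathrm{P}\,\mathrm{x}$ supplied by Assumption~4, and to treat the discrepancy between the estimated scheduling parameter $\hat{\rho}_k$ and the true parameter $\rho_k$ as a small, vanishing perturbation of the nominal closed-loop matrix. I would first anchor the decrease estimate at the true parameter: Assumption~4 guarantees $\mathrm{\Phi}_{\text{cl}}(\rho_k)^\top \mathrm{P}\,\mathrm{\Phi}_{\text{cl}}(\rho_k) - \mathrm{P} \prec -\alpha \mathrm{I}$ for every $k$, so the nominal dynamics enjoy a uniform decay margin $\alpha$. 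Writing the matrix that actually propagates the state as $\mathrm{\Phi}_{\text{cl}}(\hat{\rho}_k) = \mathrm{\Phi}_{\text{cl}}(\rho_k) + \Delta_k$, I would bound the mismatch using the Lipschitz constants of Assumption~3 together with the estimation-error bound of Assumption~1, obtaining $\|\Delta_k\| \le (L_\mathrm{\Phi} + \|\mathrm{\Gamma}\|\,L_\mathrm{K})\,\|\hat{\rho}_k - \rho_k\| < (L_\mathrm{\Phi} + \|\mathrm{\Gamma}\|\,L_\mathrm{K})\,\epsilon$. The polytopic representation in~\eqref{eq:lpv_mu} additionally yields a uniform bound $\|\mathrm{\Phi}_{\text{cl}}(\rho_k)\| \le M := \max_i \|\mathrm{\Phi}^{[i]} + \mathrm{\Gamma}\mathrm{K}^{[i]}\|$, since every realized closed-loop matrix is a convex combination of the finitely many fixed vertex matrices.

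Next I would expand the one-step difference of $V$ along the true trajectory $\mathrm{x}_{k+1} = \mathrm{\Phi}_{\text{cl}}(\hat{\rho}_k)\,\mathrm{x}_k$:
\begin{equation}
\begin{aligned}
V(\mathrm{x}_{k+1}) - V(\mathrm{x}_k)
&= \mathrm{x}_k^\top\!\left[\mathrm{\Phi}_{\text{cl}}(\rho_k)^\top \mathrm{P}\,\mathrm{\Phi}_{\text{cl}}(\rho_k) - \mathrm{P}\right]\!\mathrm{x}_k \\
&\quad + \mathrm{x}_k^\top\!\left[\Delta_k^\top \mathrm{P}\,\mathrm{\Phi}_{\text{cl}}(\rho_k) + \mathrm{\Phi}_{\text{cl}}(\rho_k)^\top \mathrm{P}\,\Delta_k + \Delta_k^\top \mathrm{P}\,\Delta_k\right]\!\mathrm{x}_k.
\end{aligned}
\end{equation}
The first bracket is at most $-\alpha\,\|\mathrm{x}_k\|^2$ by Assumption~4, and the cross and quadratic terms are bounded in norm by $\beta(\epsilon)\,\|\mathrm{x}_k\|^2$, where $\beta(\epsilon) := 2\,\|\mathrm{P}\|\,M\,(L_\mathrm{\Phi}+\|\mathrm{\Gamma}\|L_\mathrm{K})\,\epsilon + \|\mathrm{P}\|\,(L_\mathrm{\Phi}+\|\mathrm{\Gamma}\|L_\mathrm{K})^2\,\epsilon^2$ satisfies $\beta(\epsilon)\to 0$ as $\epsilon\to 0$. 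Choosing $\epsilon^*$ so that $\beta(\epsilon^*) < \alpha$ then gives the strict decrease $V(\mathrm{x}_{k+1}) - V(\mathrm{x}_k) \le -(\alpha - \beta(\epsilon^*))\,\|\mathrm{x}_k\|^2$.

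To close the argument I would convert this into geometric decay of $V$. Using $\lambda_{\min}(\mathrm{P})\|\mathrm{x}_k\|^2 \le V(\mathrm{x}_k) \le \lambda_{\max}(\mathrm{P})\|\mathrm{x}_k\|^2$ and the elementary consequence $\lambda_{\min}(\mathrm{P}) > \alpha$ of Assumption~4 (so that $0 < (\alpha-\beta(\epsilon^*))/\lambda_{\max}(\mathrm{P}) < 1$), I obtain $V(\mathrm{x}_{k+1}) \le \big(1 - (\alpha-\beta(\epsilon^*))/\lambda_{\max}(\mathrm{P})\big)\,V(\mathrm{x}_k)$. Iterating and passing back to the norm yields $\|\mathrm{x}_k\| \le C\,\lambda^k\,\|\mathrm{x}_0\|$ with $C = \sqrt{\lambda_{\max}(\mathrm{P})/\lambda_{\min}(\mathrm{P})}$ and $\lambda = \sqrt{1 - (\alpha-\beta(\epsilon^*))/\lambda_{\max}(\mathrm{P})} \in (0,1)$, which is exactly the claimed exponential bound. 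Because Assumption~4 furnishes a parameter-independent $\mathrm{P}$, the decay margin is uniform in $k$ and the slowly-varying hypothesis enters only to keep the true trajectory $\rho_k$ inside the polytope on which the common Lyapunov certificate is declared; one may take $\delta^*$ to be the largest step size consistent with that region (it would instead become the binding threshold if a parameter-dependent Lyapunov function $\mathrm{P}(\rho_k)$ were used, since passing from $\mathrm{P}(\rho_k)$ to $\mathrm{P}(\rho_{k+1})$ would then contribute a term of order $\delta$).

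The main obstacle I anticipate is making the perturbation genuinely dominated by the margin $\alpha$ uniformly over the infinite horizon. This hinges on two points: securing the uniform norm bound $M$ on the nominal closed-loop matrices (which the finite polytopic structure provides, but which must be invoked rather than assumed), and verifying that the cross-term bound is truly $O(\epsilon)$ rather than state-dependent. This is where the common Lyapunov function is essential, as it lets every step be measured against the same $\mathrm{P}$ and the same margin $\alpha$, rather than against a family of certificates that could drift with the parameter.
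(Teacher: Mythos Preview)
Your proof is correct and follows essentially the same route as the paper: the common quadratic Lyapunov function from Assumption~4, the decomposition $\Phi_{\text{cl}}(\hat{\rho}_k) = \Phi_{\text{cl}}(\rho_k) + \Delta_k$, the Lipschitz bound $\|\Delta_k\| \le (L_\Phi + \|\Gamma\|L_K)\epsilon$, and the passage to geometric decay of $V$ with the constant $C = \sqrt{\lambda_{\max}(P)/\lambda_{\min}(P)}$. Your treatment is in fact tighter than the paper's: the paper absorbs the cross terms $2\,\mathrm{x}_k^\top \Phi_{\text{cl}}(\rho_k)^\top P\,\Delta_k\,\mathrm{x}_k$ into a single $O(\epsilon^2)$ perturbation bound ``for analytical tractability,'' whereas you correctly recognize these terms are $O(\epsilon)$ and handle them explicitly by invoking the uniform bound $M = \max_i \|\Phi^{[i]} + \Gamma K^{[i]}\|$ from the polytopic structure---a step the paper glosses over but which is genuinely needed for the argument to close.
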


\begin{proof}
Let $V(\mathrm{x}_k) = \mathrm{x}_k^\top \mathrm{P} \mathrm{x}_k$. For nominal scheduling ($\hat{\rho}_k = \rho_k$), Assumption 4 gives:
\[
V(\mathrm{x}_{k+1}) - V(\mathrm{x}_k) < -\alpha \|\mathrm{x}_k\|^2.
\]
Note that this bound accounts for both the second-order perturbation term \( \Delta_k^\top \mathrm{P} \Delta_k \) and the cross terms arising from mismatch. These cross terms are conservatively absorbed into the overall perturbation bound for analytical tractability.
For mismatched scheduling, define $\Delta_k := \mathrm{\Phi}_{\text{cl}}(\hat{\rho}_k) - \mathrm{\Phi}_{\text{cl}}(\rho_k)$. From Assumption 3,
\[
\|\Delta_k\| \le L_\mathrm{\Phi} \epsilon + \|\mathrm{\Gamma}\| L_\mathrm{K} \epsilon =: L \epsilon.
\]
Then,
\begin{equation}
\begin{aligned}
V(\mathrm{x}_{k+1}) &= 
\mathrm{x}_k^\top \mathrm{\Phi}_{\text{cl}}(\rho_k)^\top \mathrm{P} \mathrm{\Phi}_{\text{cl}}(\rho_k) \mathrm{x}_k \\
&\quad + \mathrm{x}_k^\top \Delta_k^\top \mathrm{P} \Delta_k \mathrm{x}_k \\
&\quad + 2 \mathrm{x}_k^\top \mathrm{\Phi}_{\text{cl}}(\rho_k)^\top \mathrm{P} \Delta_k \mathrm{x}_k,
\end{aligned}
\end{equation}

Bounding the perturbation and cross terms, we get:
\[
V(\mathrm{x}_{k+1}) \le V(\mathrm{x}_k) - \alpha \|\mathrm{x}_k\|^2 + \lambda_{\max}(\mathrm{P})\,L^2\,\epsilon^2 \|\mathrm{x}_k\|^2.
\]
Let $\tilde{\alpha} := \alpha - \lambda_{\max}(\mathrm{P})\,L^2\,\epsilon^2$. For 
\[
\epsilon < \epsilon^* := \sqrt{\frac{\alpha}{2\,\lambda_{\max}(\mathrm{P})\,L^2}},
\]
we have $\tilde{\alpha} > \alpha/2 > 0$ and
\[
V(\mathrm{x}_{k+1}) \le (1 - \tilde{\alpha}) V(\mathrm{x}_k).
\]
Using bounds on $V(\mathrm{x})$, we conclude:
\[
\|\mathrm{x}_k\| \le \sqrt{\frac{\lambda_{\max}(\mathrm{P})}{\lambda_{\min}(\mathrm{P})}} (1 - \tilde{\alpha})^{k/2} \|\mathrm{x}_0\| =: C \lambda^k \|\mathrm{x}_0\|.
\]
\end{proof}
\begin{remark}
While the proof focuses on the estimation error $\epsilon$, the bounded rate-of-change condition $\delta$ ensures that the parameter trajectory $\{\rho_k\}$ evolves smoothly over time. This prevents rapid fluctuations that could cause $\hat{\rho}_k$ to significantly diverge from $\rho_k$, helping maintain the small mismatch condition uniformly. Hence, both $\epsilon$ and $\delta$ jointly contribute to robust exponential stability.
\end{remark}

The MAPS-gain controller remains exponentially stable under bounded scheduling mismatch, validating its robustness for real-time implementation based on estimated parameters. This result bridges the gap between theoretical LPV control design and practical applications under imperfect parameter estimation, particularly relevant for systems utilizing real-time filters such as the IMM.

\section{Experimental Setup and Scenario}
This section outlines the experimental procedures adopted in this paper. First, simulation experiments in the Simulink environment were conducted to verify whether the proposed state estimation scheme can accurately infer time-varying friction coefficients. Based on these results, HILS experiments were performed to evaluate the integrated MAPS framework combining state estimation and controller in a real-time discrete-time setting, assessing its robustness and control performance under external load.

\subsection{Experimental Environments}
All real-time experiments were conducted using the QUBE-Servo 2 platform with MATLAB/Simulink 2020b and real-time interface. Each experiment had a duration of 30 seconds and was sampled at 500 Hz (2 ms sampling time). Angular position (\( \theta \)), angular velocity (\( \omega \)), current (\( i \)), and control input were recorded for evaluation. The standard KF and MAPS were implemented using empirically tuned noise covariances. The process noise covariance $\mathrm{Q}$ was diagonal with $1\times10^{-6}$ for each state variable, and the measurement noise covariance $\mathrm{R}$ was fixed at $1\times10^{-5}$. The IMM filter consisted of two friction models—representing minimum and maximum friction and used a mode transition probability matrix where the probability of remaining in the current mode was fixed at 0.9. For control design, an LQR controller was used with $\mathrm{Q}_{LQR} = \text{diag}(100, 1, 1)$ and $\mathrm{R}_{LQR} = 10$, prioritizing tight regulation of angular position (\( \theta \)) while moderately attenuating input effort. Two classes of experimental scenarios were investigated: (i) step input responses under both load and no-load conditions, and (ii) sine wave tracking in similar settings. These scenarios aim to analyze both transient and steady-state performance characteristics under realistic friction variations.

\subsection{Experimental Scenarios}
The experimental investigation in this study consists of two main scenarios. First, to evaluate the response of the system to a step input, experiments are performed under two conditions: with and without an additional load directly applied to the disk. The resulting behaviors are compared to characterize the influence of load on system dynamics. Second, for the sine wave tracking task, both load-free and loaded cases are considered, and in each scenario, the performance of the MAPS-gain LQR controller is compared against that of a fixed-gain LQR controller. This comparison is conducted to demonstrate the effectiveness of each control strategy under realistic, time-varying friction and load conditions. Through these scenarios, the integrated performance of state estimation and control is systematically evaluated under varying experimental conditions.

\begin{figure}[!ht]
  \centering
  \includegraphics[width=0.5\textwidth]{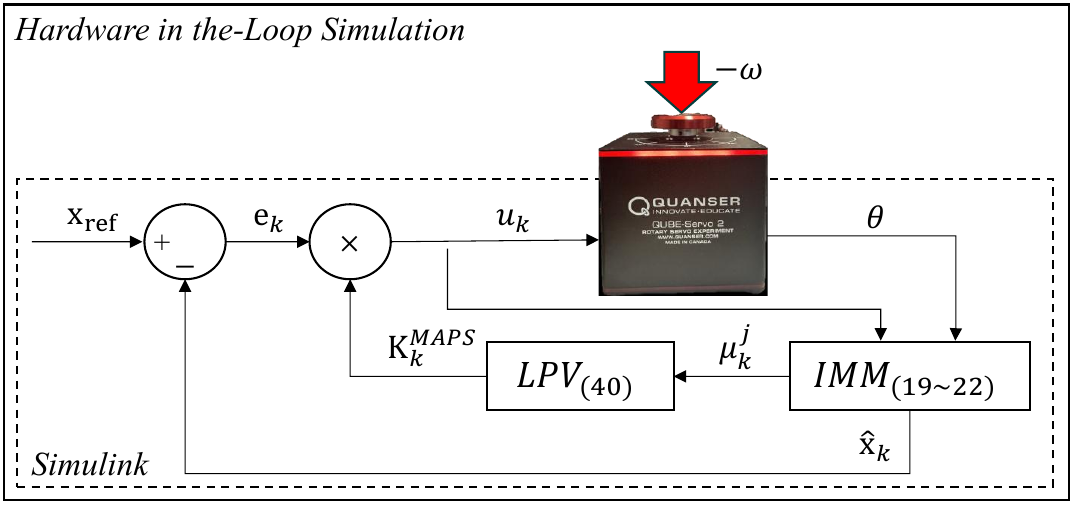}
  \caption{Experimental Architecture}
  \label{fig:friction}
\end{figure}

\section{Simulation Results and Analysis}
\label{sec:simulation}

This section presents simulation-based validation of the proposed MAPS framework under varying friction conditions. MATLAB simulations were performed to evaluate the estimation performance of the IMM-KF compared to standard KF. The friction coefficient $b$ was artificially varied between $b_{\mathrm{min}}$ and $b_{\mathrm{max}}$ to test adaptive estimation abilities.

\subsection{State Estimation Performance}

Fig.~\ref{fig:state_est}. compares the state estimation results of the IMM-KF and the standard KF for angular position ($\theta$), angular velocity ($\omega$), and current ($i$). In the position plot, both filters show similar accuracy during steady-state operation. however, clear differences arise during periods of friction variation. Notably, for unmeasured states such as velocity and current, the IMM-KF closely tracks the true state trajectory despite friction changes around $t = 0.2$--$0.4$~[s] in Fig.~\ref{fig:error}, while the standard KF exhibits delays and larger overshoot. This underlines the IMM-KF's superior ability to adapt to rapid, nonlinear parameter shifts. The IMM-KF's adaptability is especially evident in scenarios with parameter transitions, which are common in practical systems experiencing time-varying friction or load disturbances.

\begin{figure}[!h]
\centering
\includegraphics[width=0.48\textwidth]{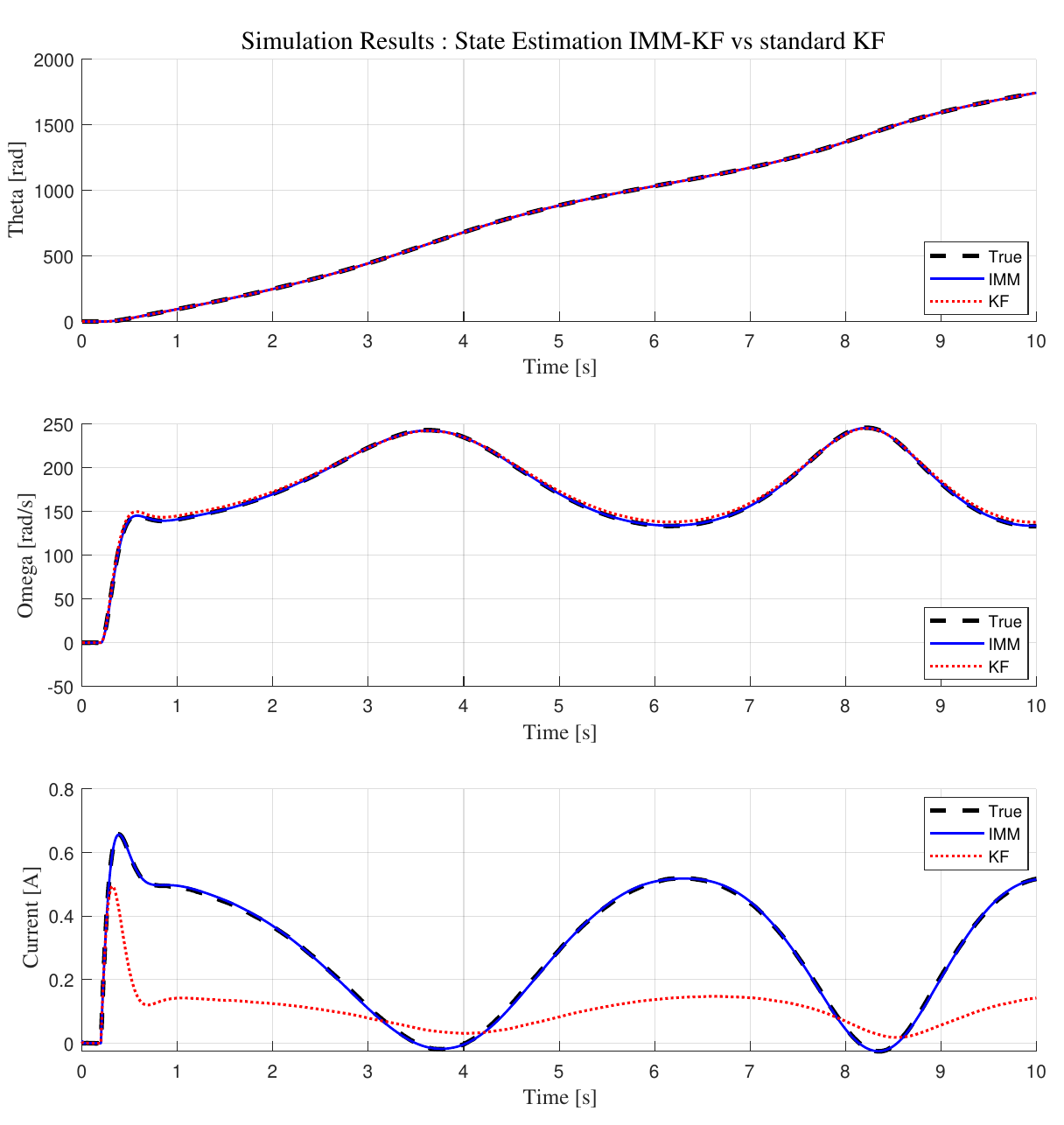}
\caption{State estimation comparison: IMM-KF(red) vs. standard KF(blue).}
\label{fig:state_est}
\end{figure}

\subsection{Estimation Error Analysis}

Fig.~\ref{fig:error}. illustrates the time-series estimation errors of angular position ($ \theta $), angular velocity ($ \omega $), and current ($ i $) for both the IMM-KF and the standard KF. While position estimates are comparable, the IMM-KF shows clear advantages in estimating unmeasured states specifically, angular velocity ($ \omega $) and current ($ i $). During abrupt friction transitions, the standard KF suffers from transient oscillations and steady-state error, whereas the MAPS maintains low error magnitudes and quickly reconverges after disturbances. This results in more consistent and reliable state feedback, crucial for control performance.Additionally, as shown in the mode probability subplot of Fig.~\ref{fig:error}, a rapid change in mode probability is detected around 0.4 seconds, indicating that the IMM-KF accurately recognizes and adapts to the dynamic mode transition. Correspondingly, the lower subplot illustrates the estimated friction coefficient ($b$), which closely tracks the true time-varying friction during gradual and slow changes. However, it is worth noting that the estimator exhibits a lagged response to abrupt shifts in friction, highlighting a limitation in rapidly capturing sudden parameter changes. These results clearly demonstrate that the MAPS framework enables robust and adaptive real-time mode identification, effectively handling slowly varying friction conditions while exhibiting some delay in response to abrupt transitions.These results jointly demonstrate that the MAPS not only improves traditional state estimation metrics, but also enables real-time identification of dynamic friction behavior, thereby strengthening the MAPS framework adaptability to real-world electro-mechanical systems.
\begin{figure}[!h]
\centering
\includegraphics[width=0.48\textwidth]{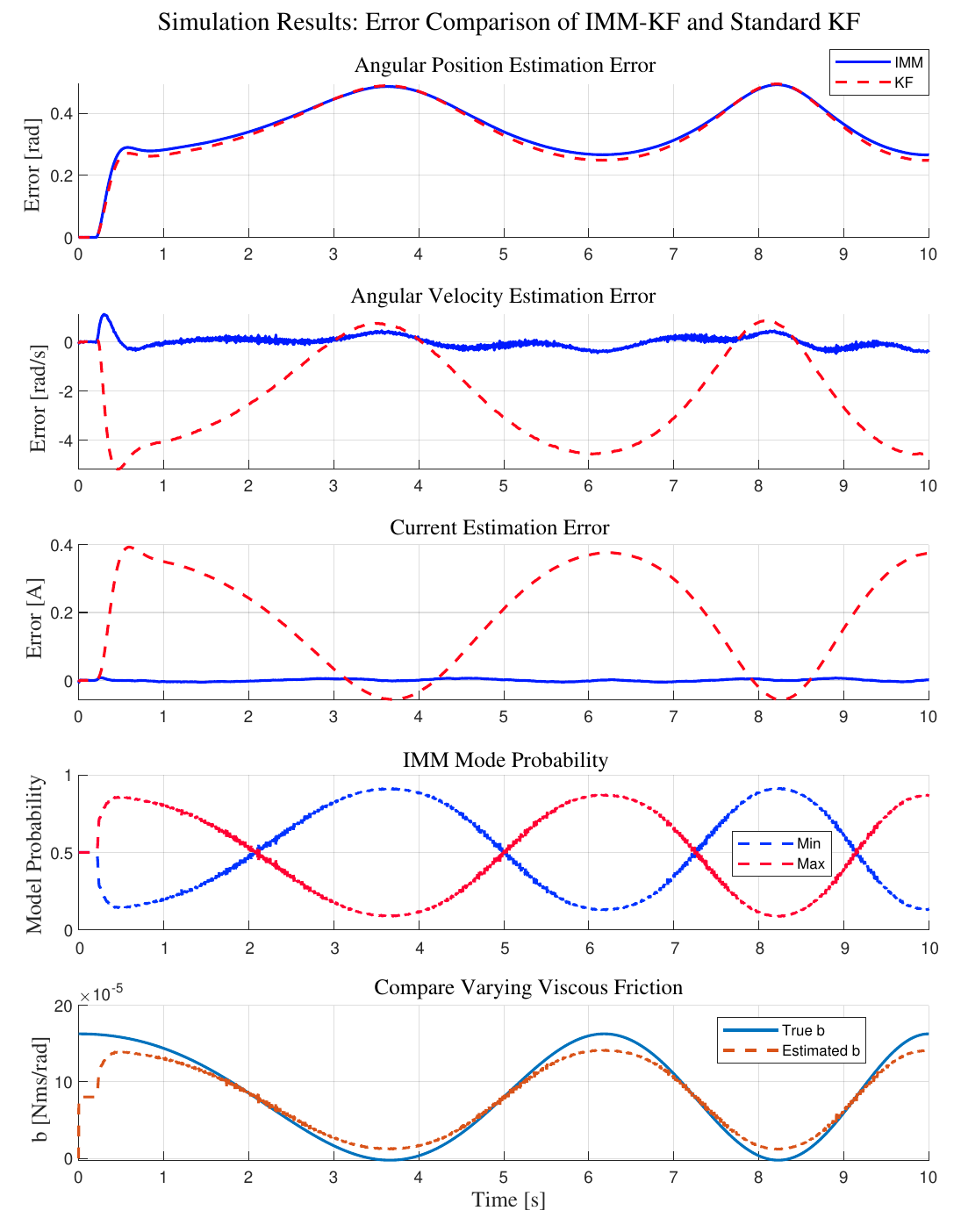}
\caption{State estimation error comparison: IMM-KF(red) vs. standard KF(blue).}
\label{fig:error}
\end{figure}

\subsection{Quantitative Comparison Using RMSE}
The quantitative performance of the MAPS was evaluated using the RMSE of the estimated states, as summarized in Table~\ref{tab:rmse_comparison}. This table clearly demonstrates that the MAPS achieves a notable reduction in RMSE for angular position ($\theta$), angular velocity ($\omega$), and current ($i$) estimation compared to the standard KF. While angular position ($\theta$) estimation errors remain nearly identical between the two methods, MAPS exhibits a significant advantage in estimating unmeasured states. Detailed values in Table~\ref{tab:rmse_comparison} reveal that MAPS reduces the angular velocity ($\omega$) estimation error by over 90\% and the current ($i$) estimation error by more than 96\%. These results indicate that the MAPS is highly effective at estimating unmeasured states, particularly under varying friction conditions. Accurate estimation of angular velocity and current is crucial for precise torque generation and robust friction compensation, both of which are essential for achieving high control performance and stability in real-time applications.

\begin{table}[!t]
\centering
\caption{RMSE Comparison of State Estimation}
\begin{tabular}{l|c|c}
\hline\hline
\textbf{State Variable} & \textbf{IMM-KF RMSE} & \textbf{Standard KF RMSE} \\
\hline
Angular Position ($\theta$) [rad] & 0.3698 & 0.3681 \\
Angular Velocity ($\omega$) [rad/s] & 0.1956 & 2.4313 \\
Current ($i$) [A] & 0.0091 & 0.2470 \\
\hline\hline
\end{tabular}
\label{tab:rmse_comparison}
\end{table}

\section{HILS Experimental Results and Analysis}
This section reports on the real-time HILS experiments designed to verify the effectiveness of the proposed MAPS framework in realistic, hardware-level environments. We systematically evaluate state estimation and control tracking performance under varying friction and load conditions.

\subsection{State Estimation Error Performance}
This section is to evaluate the state estimation performance of the IMM-KF compared to a standard KF under varying friction conditions. Experiments were conducted over a 30-second interval using a limited voltage range, and external friction was periodically applied and removed to emulate time-varying frictional behavior. The RMSE was adopted as the primary performance metric to quantify estimation accuracy.

\subsubsection{Visualization of Instantaneous Estimation Errors}

Fig.~\ref{fig:estimation_error_comparison}. illustrates the time-series plots of state estimation errors obtained from both the IMM-KF and the standard KF. The test scenario dynamically alternated between low and high friction states, which introduces nonlinearity and uncertainty to the plant behavior. The plotted error signals clearly demonstrate that the MAPS maintains a significantly tighter error bound across all states (angular position ($\theta$), angular velocity ($\omega$), and current ($i$)), especially during transitions caused by friction changes. In contrast, the standard KF exhibits noticeable estimation drifts and overshoots, indicating degraded performance under such uncertainties.

\begin{figure}[!ht]
    \centering
    \includegraphics[width=0.5\textwidth]{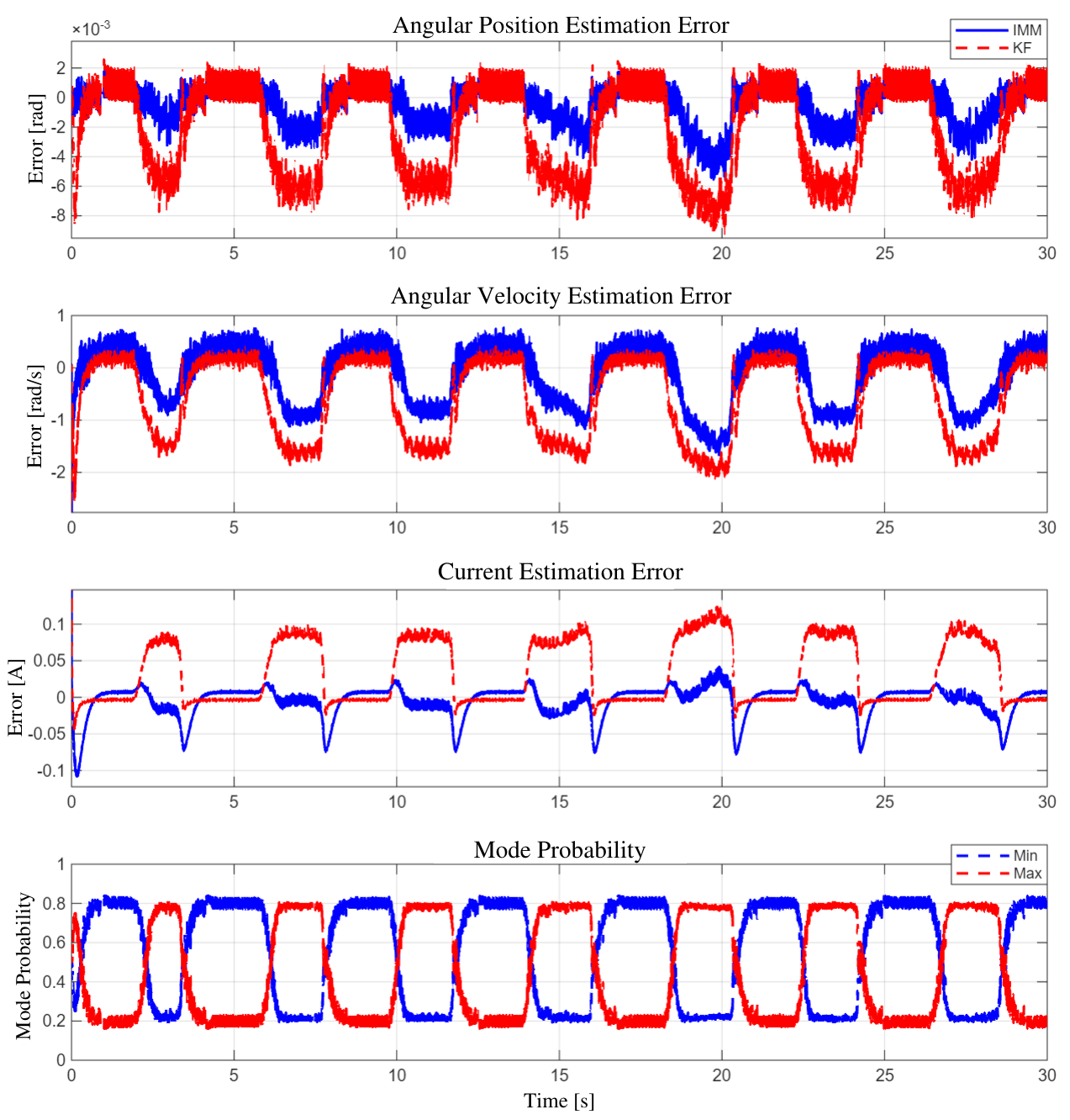}
    \caption{HILS-based state estimation error comparison: IMM-KF (red) vs. standard KF (blue).}
    \label{fig:estimation_error_comparison}
\end{figure}

\subsubsection{Quantitative Comparison Using RMSE}
To provide a numerical comparison, Table~\ref{tab:hils_rmse_comparison} summarizes the RMSE values for each state variable. The IMM-KF demonstrates substantial improvements across all states. In particular, the estimation error for angular position (\( \theta \)) is reduced by approximately 63\%, while angular velocity (\( \omega \)) and current (\( i \)) RMSE decrease by about 42\% and 57\%, respectively. These results quantitatively confirm the superior estimation accuracy of the IMM-KF in time-varying and uncertain control scenarios.

\begin{table}[!t]
\centering
\caption{HILS-based RMSE Comparison of State Estimation}
\begin{tabular}{l|c|c}
\hline\hline 
\textbf{State Variable} & \textbf{MAPS RMSE} & \textbf{Standard KF RMSE} \\
\hline
Angular Position ($\theta$) [rad] & 0.0014 & 0.0038 \\
Angular Velocity ($\omega$) [rad/s] & 0.5725 & 0.9825 \\
Current ($i$) [A]                 & 0.0223 & 0.0523 \\
\hline\hline
\end{tabular}
\label{tab:hils_rmse_comparison}
\end{table}

These findings highlight the MAPS robustness in coping with system uncertainties and its adaptive capability for accurate state tracking, demonstrating a clear advantage over the standard KF architecture in HILS-based evaluation settings.


\subsection{Control Performance: MAPS-gain LQR vs. fixed-gain LQR Controller}

To compare control performance, both MAPS-gain LQR controller and fixed-gain LQR controller were tested under identical reference and friction conditions.
To comprehensively evaluate the controller performance, both step response and reference tracking tasks were conducted using various reference signals. Specifically, square and sine waveforms were employed as reference inputs, with amplitudes ranging from $-4\,\mathrm{V} \; \text{to} \; 4\,\mathrm{V}$. The amplitudes and frequencies of these signals were carefully chosen to remain within the physical limits of the QUBE-Servo 2 platform, thereby ensuring safe operation throughout all experiments.

\subsection{Step Response without external load}

\begin{figure}[!ht]
    \centering
    \includegraphics[width=0.5\textwidth]{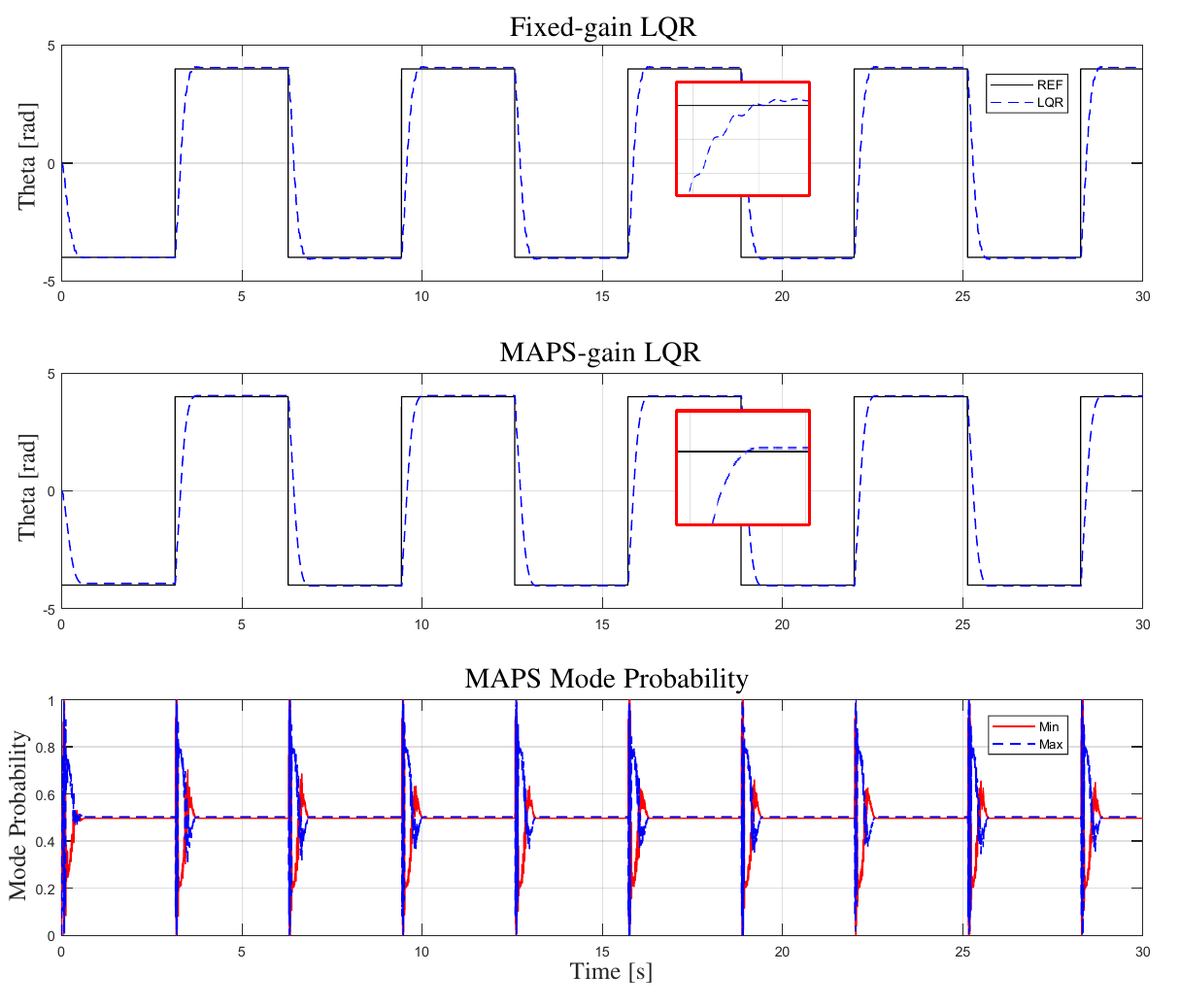}
    \caption{Closed-loop step response without external load: comparison between fixed-gain LQR controller and MAPS-gain LQR controller.}
    \label{fig:step_notouch}
\end{figure}

Fig.~\ref{fig:step_notouch}. presents the closed-loop step responses under nominal conditions, i.e., without external load. Both the fixed-gain LQR controller and the MAPS-gain LQR controller successfully track the reference signal with minimal steady-state error. However, the MAPS-gain LQR controller demonstrates significantly better transient performance, with notably smaller overshoot and faster settling time. This is primarily due to its adaptive gain adjustment in response to real-time mode probability, as shown in the lower subplots. The mode probability of the MAPS transitions smoothly with system dynamics, indicating its capacity to accommodate subtle nonlinear behaviors even without external load.

\subsection{Step Response with external load}

\begin{figure}[!ht]
    \centering
    \includegraphics[width=0.5\textwidth]{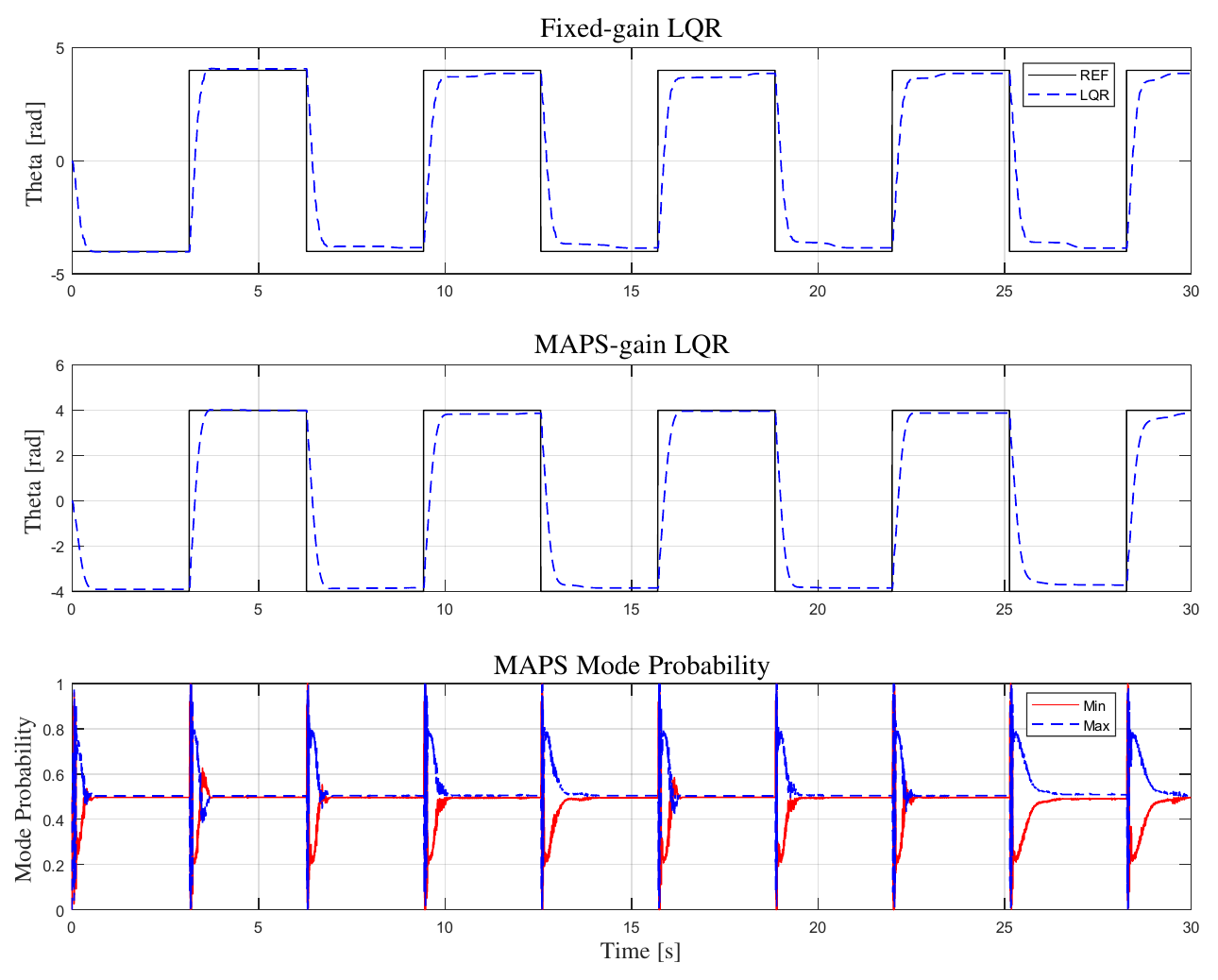}
    \caption{Closed-loop step response with external load: comparison between fixed-gain LQR controller and MAPS-gain LQR controller.}
    \label{fig:step_touch}
\end{figure}

Fig.~\ref{fig:step_touch}. illustrates the system response when an external load is introduced via external load during step tracking. Under this perturbation, the fixed-gain LQR controller exhibits a noticeable degradation in tracking performance, with prolonged transient oscillations and significant steady-state offset. In contrast, the MAPS-gain adapts its LQR controller gain based on changing mode probability, which reflects a friction-induced dynamic mode shift. As a result, its tracking response remains well-regulated, with considerably shorter settling time and smaller error. The effectiveness of mode scheduling in real-time gain adaptation is clearly evidenced here.

\subsection{Sine Tracking without external load}

\begin{figure}[!ht]
    \centering
    \includegraphics[width=0.5\textwidth]{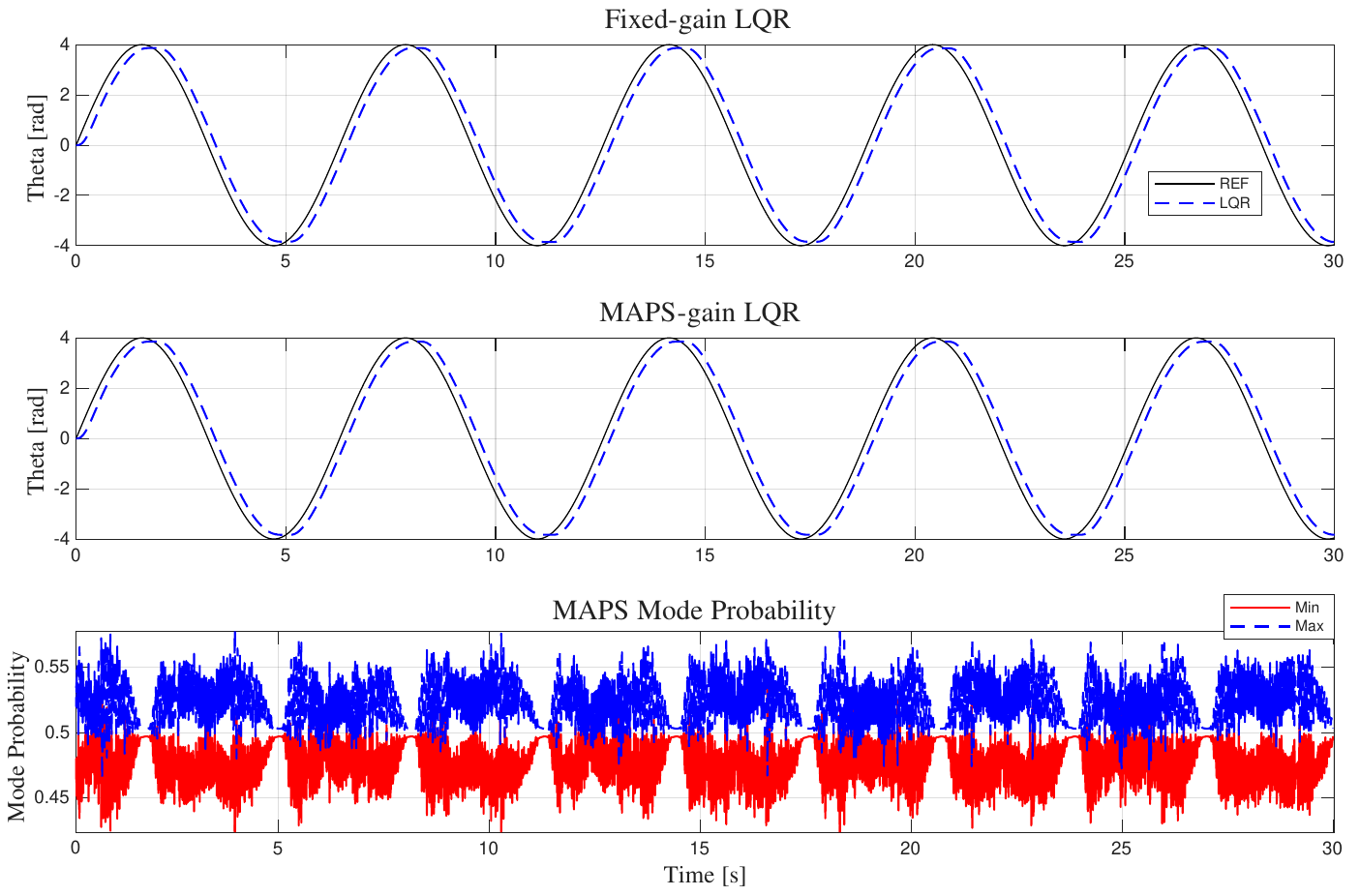}
    \caption{Closed-loop sine tracking responses without external load: comparison between fixed-gain LQR controller and MAPS-gain LQR controller.}
    \label{fig:sine_notouch}
\end{figure}

Fig.~\ref{fig:sine_notouch}. shows the tracking performance for a sinusoidal reference signal in the absence of external load. In this nominal condition, both controllers follow the reference trajectory closely. However, the MAPS-gain LQR controller exhibits superior phase alignment and reduced amplitude distortion throughout the input cycle. The mode probability graph indicates a low activation of alternative modes, implying that the MAPS controller maintains a stable regime under smooth operating conditions. The consistent performance of the MAPS-gain LQR controller in this scenario emphasizes its robustness without sacrificing tracking precision.

\subsection{Sine Tracking with external load}

\begin{figure}[!ht]
    \centering
    \includegraphics[width=0.5\textwidth]{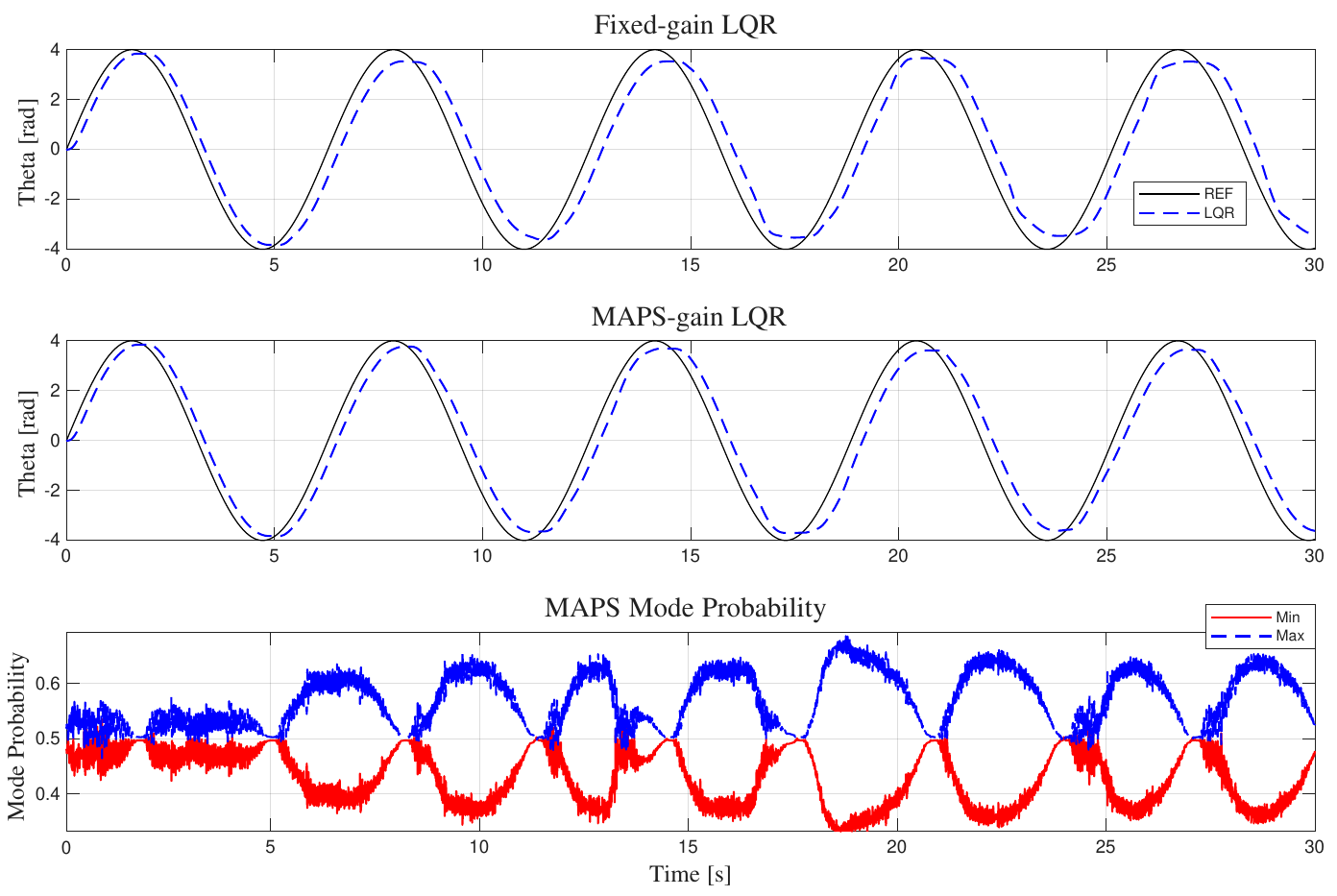}
    \caption{Closed-loop sine tracking responses with external load: comparison between fixed-gain LQR controller and MAPS-gain LQR controller.}
    \label{fig:sine_touch}
\end{figure}

Finally, Fig.~\ref{fig:sine_touch}. presents the sine wave tracking under external load. The fixed-gain LQR controller struggle to maintain phase and amplitude response consistency, displaying tracking lag and steady-state deviations across the waveform. The MAPS-gain LQR controller, on the other hand, dynamically adjusts its gain in response to the friction disturbance evidenced by significantly shifting mode probabilities and successfully preserves accuracy in oscillatory tracking. These results underline the MAPS-gain LQR controller clear advantage in environments with time-varying uncertainties such as friction or load shifts.

\subsection{Control performance Summary}
As summarized in Table~\ref{tab:tracking_comp}, the performance comparison between the MAPS-gain LQR and fixed-gain LQR controllers reveals several key insights. Under nominal step input conditions without disturbances, both controllers achieved similar tracking performance, with the fixed-gain LQR slightly outperforming MAPS-gain LQR by a margin of 1.67\% in IAE. However, as the control environment becomes more dynamic—particularly during sine tracking and under external brake disturbances—the MAPS-gain LQR demonstrates clear advantages. In the presence of external friction during sine tracking, MAPS-gain LQR reduced RMSE by 16.11\%, MAE by 15.24\%, and IAE by 15.24\% compared to the fixed-gain LQR. These improvements confirm the MAPS controller ability to adapt to mode transitions and retain accuracy in uncertain, time-varying environments.

The use of RMSE, MAE, and IAE as performance metrics offers a comprehensive evaluation of tracking quality. RMSE captures the average magnitude of error with stronger sensitivity to outliers, thus indicating transient performance. MAE provides a robust measure of the average tracking deviation, reflecting overall response consistency. IAE quantifies the cumulative effect of tracking errors over time, serving as a strong indicator of long-term control precision. Together, these metrics allow for a balanced and multi-dimensional assessment of both immediate and accumulated control performance~\cite{yuan2019error, chen2008best}.

\begin{table}[!ht]
\centering
\caption{Controller Performance Comparison: \\ Fixed-gain LQR vs. MAPS-gain LQR}
\begin{tabular}{c|c|c|c}
\hline\hline
\textbf{Scenario} & \textbf{Metric} & \textbf{Fixed-Gain} & \textbf{MAPS-Gain} \\
\hline
\multirow{3}{*}{\makecell{Step \\ (w/o load)}} & RMSE & 1.4756 & 1.4817  \\
                                               & MAE  & 0.4735 & 0.4815  \\
                                               & IAE  & 14.2026 & 14.4405 \\
\hline
\multirow{3}{*}{\makecell{Sine \\ (w/o load)}} & RMSE & 0.5348 & 0.5110 \\
                                               & MAE  & 0.4920 & 0.4715 \\
                                               & IAE  & 14.7622 & 14.1458 \\
\hline
\multirow{3}{*}{\makecell{Step \\ (w/ Load)}}  & RMSE & 1.5001 & 1.4915 \\
                                               & MAE  & 0.6243 & 0.5886  \\
                                               & IAE  & 18.7269 & 17.6559 \\
\hline
\multirow{3}{*}{\makecell{Sine \\ (w/ Load)}}  & RMSE & 0.8356 & 0.7009 \\
                                               & MAE  & 0.7485 & 0.6344 \\
                                               & IAE  & 22.4560 & 19.0336 \\
\hline\hline
\end{tabular}
\label{tab:tracking_comp}
\end{table}

\section{Conclusion}
In this paper, we proposed a novel adaptive control framework MAPS, which integrates an IMM estimator with a LPV control strategy. The key innovation in MAPS is leveraging real-time mode probabilities from the IMM as convex scheduling weights for gain interpolation in LPV-based control. Unlike conventional methods that depend on measurable physical scheduling variables, MAPS utilizes probabilistic inference to dynamically adapt control gains, enabling robust and effective management of time-varying friction and disturbances.

The practical applicability and performance improvements of the MAPS framework were extensively validated through HILS experiments on the QUBE-Servo 2 platform. Compared to a standard KF, MAPS achieved significant reductions in estimation errors up to 63\% in angular position (\( \theta \)), 42\% in angular velocity (\( \omega \)), and 57\% in current (\( i \)). Furthermore, the gain-scheduled LQR controller within the MAPS framework reduced closed-loop tracking RMSE by 16.11\% relative to a fixed-gain LQR controller across various reference tracking tasks under friction variation.

From a theoretical standpoint, the MAPS controller guarantees closed-loop quadratic stability by maintaining the controller gain as a convex combination of stabilizing vertex gains under a common Lyapunov function. This stability assurance underscores the robustness of the MAPS framework when confronted with uncertainties inherent in real-world systems.

Looking forward, future research will extend MAPS to more general control settings, including robust control frameworks such as \(\mathcal{H}_2\) and \(\mathcal{H}_\infty\) controllers, to investigate its effectiveness under scenarios where controllability or observability may be compromised. Additionally, we aim to explore the integration of dynamic system models with artificial intelligence techniques, broaden applications to multi-modal vehicular platforms, and implement MAPS on full-scale embedded control architectures.

By uniting probabilistic mode estimation with optimal gain scheduling, MAPS lays a promising foundation for scalable and generalizable adaptive control solutions in next-generation intelligent mobility systems.

 
\bibliographystyle{Bibliography/IEEEtranTIE}
\bibliography{references}


\end{document}